\newtheorem{theorem}{Theorem}[section]
\newtheorem{proposition}[theorem]{Proposition}
\newtheorem{definition}[theorem]{Definition}
\newtheorem{lemma}[theorem]{Lemma}
\newtheorem{remark}[theorem]{Remark}
\newcommand{\qedsymb}{\hfill{\rule{2mm}{2mm}}}
\renewenvironment{proof}[1][]{\begin{trivlist}
\item[\hspace{\labelsep}{\bf\noindent Proof#1:\/}] }{\qedsymb\end{trivlist}}
\newcommand{\PPA}{\mathsf{PPA}}
\newcommand{\TFNP}{\mathsf{TFNP}}
\newcommand{\PPAD}{\mathsf{PPAD}}
\newcommand{\PPP}{\mathsf{PPP}}
\newcommand{\PLS}{\mathsf{PLS}}
\newcommand{\CLS}{\mathsf{CLS}}
\newcommand{\EOPL}{\mathsf{EOPL}}
\newcommand{\NP}{\mathsf{NP}}
\newcommand{\FISC}{\textsc{Fair-IS-Cycle}}
\newcommand{\FSplitC}{\textsc{Fair-Split-Cycle}}
\newcommand{\FSplitP}{\textsc{Fair-Split-Path}}
\newcommand{\ConHalv}{\textsc{Con-Halving}}
\newcommand{\OTucker}{\textsc{Octahedral-Tucker}}
\newcommand{\LEAF}{\textsc{Leaf}}
\newcommand{\SchrijverP}{\textsc{Schrijver}}
\newcommand{\KneserP}{\textsc{Kneser}}
\newcommand{\alt}{\mathrm{alt}}
\newcommand{\eps}{\epsilon}
\renewcommand{\epsilon}{\varepsilon}
\begin{document}

\title{{\bf The Complexity of Finding Fair Independent Sets in Cycles}}

\author{
Ishay Haviv\thanks{School of Computer Science, The Academic College of Tel Aviv-Yaffo, Tel Aviv 61083, Israel.
Research supported in part by the Israel Science Foundation (grant No. 1218/20).
}
}

\date{}

\maketitle

\begin{abstract}
Let $G$ be a cycle graph and let $V_1,\ldots,V_m$ be a partition of its vertex set into $m$ sets.
An independent set $S$ of $G$ is said to {\em fairly represent} the partition if $|S \cap V_i| \geq \frac{1}{2} \cdot |V_i| -1$ for all $i \in [m]$.
It is known that for every cycle and every partition of its vertex set, there exists an independent set that fairly represents the partition (Aharoni et al., A Journey through Discrete Math.,~2017). We prove that the problem of finding such an independent set is $\PPA$-complete.
As an application, we show that the problem of finding a monochromatic edge in a Schrijver graph, given a succinct representation of a coloring that uses fewer colors than its chromatic number, is $\PPA$-complete as well.
The work is motivated by the computational aspects of the `cycle plus triangles' problem and of its extensions.

\end{abstract}


\section{Introduction}

In 1986, Du, Hsu, and Hwang~\cite{DHH93} conjectured that if a graph on $3n$ vertices is the disjoint union of a Hamilton cycle of length $3n$ and $n$ pairwise vertex-disjoint triangles then its independence number is $n$.
The conjecture has become known as the `cycle plus triangles' problem and has been strengthened by Erd{\"{o}}s~\cite{ErdosQuestions90}, who conjectured that such a graph is $3$-colorable.
Fleischner and Stiebitz~\cite{FleischnerS92} confirmed these conjectures in a strong form and proved, using an algebraic approach of Alon and Tarsi~\cite{AlonT92}, that such a graph is in fact $3$-choosable.
Their proof can also be viewed as an application of Alon's Combinatorial Nullstellensatz technique~\cite{AlonNull99}. Slightly later, an alternative elementary proof of the $3$-coloring result was given by Sachs~\cite{Sachs93}. However, none of these proofs supplies an efficient algorithm that given a graph on $3n$ vertices whose set of edges is the disjoint union of a Hamilton cycle and $n$ pairwise vertex-disjoint triangles finds a $3$-coloring of the graph or an independent set of size $n$. Questions on the computational aspects of the problem were posed in several works over the years (see, e.g.,~\cite{FleischnerS97,AlonChallenges02,Berczi017,AABCKLZ17}).

A natural extension of the problem of Du et al.~\cite{DHH93} is the following. Let $G$ be a cycle and let $V_1, \ldots ,V_m$ be a partition of its vertex set into $m$ sets.
We are interested in an independent set of $G$ that (almost) {\em fairly represents} the given partition, that is, an independent set $S$ of $G$ satisfying $|S \cap V_i| \geq \frac{1}{2} \cdot |V_i| -1$ for all $i \in [m] = \{1,\ldots,m\}$. The existence of such an independent set was proved in a work of Aharoni, Alon, Berger, Chudnovsky, Kotlar, Loebl, and Ziv~\cite{AABCKLZ17}. For the special case where all the sets $V_i$ are of size $3$, the proof technique of Aharoni et al.~\cite{AABCKLZ17} allowed them to show that there are {\em two} disjoint independent sets that fairly represent the partition, providing a new proof of a stronger form of the original conjecture of Du et al.~\cite{DHH93}. The results of~\cite{AABCKLZ17} were then extended in a work of Alishahi and Meunier~\cite{AlishahiM17}. A special case of one of their results is the following.

\begin{theorem}[\cite{AlishahiM17}]\label{thm:AlishahiM}
Let $G$ be a cycle on $n$ vertices and let $V_1,\ldots,V_m$ be a partition of its vertex set into $m$ sets.
Suppose that $n$ and $m$ have the same parity.
Then, there exist two disjoint independent sets $S_1$ and $S_2$ of $G$ covering all vertices but one from each $V_i$ such that for each $j \in \{1,2\}$, it holds that $|S_j \cap V_i| \geq \frac{1}{2} \cdot |V_i|-1$ for all $i \in [m]$.
\end{theorem}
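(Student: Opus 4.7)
The plan is to attack this with the topological-combinatorial toolkit that proved Aharoni et al.'s single-independent-set result, but pushed one step further: instead of invoking the Borsuk--Ulam/Schrijver chromatic argument, I would use its $\Z_2$-equivariant refinement in the form of Ky Fan's theorem on antipodal labellings of the octahedral sphere. This seems natural because we now seek a \emph{signed} decomposition $(S_1, S_2)$ and need a parity argument to guarantee a well-behaved signed object, which matches the structure of Ky Fan's alternating-simplex conclusion.

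First, I would label the vertices of the cycle $G$ as $1, \ldots, n$ in cyclic order and identify each ordered pair of disjoint subsets $(A^+, A^-)$ of $[n]$ with a simplex of the octahedral triangulation $\diamondsuit^{n-1}$ of the $(n-1)$-sphere, whose vertex set is $\{\pm 1, \ldots, \pm n\}$. The free $\Z_2$-action is the antipodal swap $(A^+, A^-) \mapsto (A^-, A^+)$, and the pair $(S_1, S_2)$ we wish to produce is a simplex of this complex; its dimension is linked to the parity condition of the theorem through a Ky Fan count.

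Next, I would define an antipodal labelling $\lambda$ of the vertices of $\diamondsuit^{n-1}$ with values in $\{\pm 1, \ldots, \pm(n+m)\}$ that records, for a signed pair $(A^+, A^-)$, a ``first obstruction'' to the desired conclusion. Labels with absolute value in $[n]$ would witness a violation of independence (a consecutive pair on the cycle lying inside $A^+$ or $A^-$, signed according to which side carries the violation), while labels with absolute value in $\{n+1, \ldots, n+m\}$ would witness a block $V_i$ that is either not suitably covered or under-represented (signed according to which side fails). Antipodality holds by construction. Ky Fan's theorem then produces an alternating $n$-simplex, and the parity hypothesis $n \equiv m \pmod{2}$ is exactly what is required for a parity count on label types to rule out alternating simplices whose labels all encode obstructions; at least one alternating simplex must therefore correspond to a signed pair free of any obstruction, and this pair is the desired $(S_1, S_2)$.

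The main obstacle is making the labelling simultaneously antipodal, consistent on faces (so the alternating-simplex conclusion translates back into a single coherent $(S_1, S_2)$), and tight enough that the parity argument rules out the all-obstruction alternating simplices \emph{exactly} under the hypothesis $n \equiv m \pmod{2}$. Aligning these bookkeeping details---in particular, verifying that the parity hook is indeed $n \equiv m \pmod{2}$ rather than a looser or tighter condition---is where the bulk of the Alishahi--Meunier argument lives.
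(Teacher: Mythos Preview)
Your plan diverges from the paper's (Alishahi--Meunier) argument at the decisive point, and that divergence is where the gap sits.

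You propose to encode independence violations explicitly as labels with absolute value in $[n]$ (a consecutive pair inside $A^+$ or $A^-$), together with block-coverage obstructions in $\{n+1,\ldots,n+m\}$. This forces a label range $\{\pm 1,\ldots,\pm(n+m)\}$, which already rules out a direct appeal to the Octahedral Tucker lemma on $\{+,-,0\}^n$ (that lemma needs range $\{\pm 1,\ldots,\pm(n-1)\}$). You fall back on Ky Fan, but Ky Fan on the barycentric subdivision yields an alternating \emph{chain} of nested signed sets, not a single pair $(S_1,S_2)$; your sentence ``at least one alternating simplex must therefore correspond to a signed pair free of any obstruction'' hides exactly the step that has to be supplied, and the promised ``parity count on label types'' is never specified. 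I do not see how to carry it out with this labelling.

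The paper's argument (presented here as the reduction from $\FSplitC$ to $\OTucker$) avoids the whole difficulty by \emph{not} spending labels on independence. The labelling $\lambda$ on $\{+,-,0\}^n\setminus\{0\}$ has only two regimes: if no block $V_i$ is over-represented by either side (the set $J(x)$ is empty), put $\lambda(x)=\pm\alt(x)$ with the sign of the first nonzero entry of $x$; if some $V_i$ is over-represented, put $\lambda(x)=\pm(i+n-m-1)$ for the largest such $i$. This keeps $|\lambda|\le n-1$, so Octahedral Tucker applies directly and returns $x\preceq y$ with $\lambda(x)=-\lambda(y)$. A short case analysis forces $J(x)=\emptyset$, $J(y)\neq\emptyset$, and hence $\alt(x)\ge n-m$; combined with $J(x)=\emptyset$ one gets $|x^+\cup x^-|=n-m$, exactly one vertex of each $V_i$ missing, and each side holding at least $\tfrac{1}{2}|V_i|-1$ vertices. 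Independence of $S_1=x^+$ and $S_2=x^-$ now comes \emph{for free}: since $\alt(x)=|x^+|+|x^-|$, the nonzero entries of $x$ strictly alternate along $[n]$, so $x^+$ and $x^-$ are stable in the cycle precisely when $n-m$ is even. That is where, and only where, the parity hypothesis $n\equiv m\pmod 2$ enters.

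So the idea you are missing is that independence should be harvested structurally from the alternation of $x$ rather than purchased with extra labels; this is what keeps the label range inside $\{\pm 1,\ldots,\pm(n-1)\}$ and makes the parity hypothesis appear exactly as the condition that an alternating set of size $n-m$ closes up into two independent sets around the cycle.
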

\noindent
As shown by Black et al.~\cite{BCFPS20}, analogues of Theorem~\ref{thm:AlishahiM} for paths and for partitions into sets of odd sizes can also be proved using the approach of Aharoni et al.~\cite{AABCKLZ17}.

It is interesting to mention that although the statements of Theorem~\ref{thm:AlishahiM} and of its aforementioned variants are purely combinatorial, all of their known proofs are based on tools from topology. The use of topological methods in combinatorics was initiated by Lov\'asz~\cite{LovaszKneser} who applied the Borsuk-Ulam theorem~\cite{Borsuk33} from algebraic topology to prove a conjecture of Kneser~\cite{Kneser55} on the chromatic number of Kneser graphs. For integers $n \geq 2k$, the {\em Kneser graph} $K(n,k)$ is the graph whose vertices are all the $k$-subsets of $[n]$ where two sets are adjacent if they are disjoint. It was proved in~\cite{LovaszKneser} that the chromatic number of $K(n,k)$ is $n-2k+2$, a result that was strengthened and generalized by several researchers (see, e.g.,~\cite[Chapter~3]{MatousekBook}).
One such strengthening was obtained by Schrijver~\cite{SchrijverKneser78}, who studied the subgraph of $K(n,k)$ induced by the collection of all $k$-subsets of $[n]$ with no two consecutive elements modulo $n$. This graph is denoted by $S(n,k)$ and is commonly referred to as the {\em Schrijver graph}. It was proved in~\cite{SchrijverKneser78}, again by a topological argument, that the chromatic number of $S(n,k)$ is equal to that of $K(n,k)$.
As for Theorem~\ref{thm:AlishahiM}, the proof of Alishahi and Meunier~\cite{AlishahiM17} employs the Octahedral Tucker lemma that was applied by Matou{\v{s}}ek~\cite{Matousek04} in an alternative proof of Kneser's conjecture and can be viewed as a combinatorial formulation of the Borsuk-Ulam theorem (see also~\cite{Ziegler02}). The approach of Aharoni et al.~\cite{AABCKLZ17} and of Black et al.~\cite{BCFPS20}, however, is based on a direct application of the chromatic number of the Schrijver graph.
As before, these proofs are not constructive, in the sense that they do not suggest efficient algorithms for the corresponding search problems.
Understanding the computational complexity of these problems is the main motivation for the current work.

In 1994, Papadimitriou~\cite{Papa94} has initiated the study of the complexity of total search problems in view of the mathematical argument that lies at the existence proof of their solutions.
Let $\TFNP$ be the complexity class, defined in~\cite{MegiddoP91}, of the total search problems in $\NP$, that is, the class of search problems in which a solution is guaranteed to exist and can be verified in polynomial running-time.
Papadimitriou has introduced in~\cite{Papa94} several subclasses of $\TFNP$, each of which consists of the total search problems that can be reduced to a problem that represents some mathematical argument.
For example, the class $\PPA$ (Polynomial Parity Argument) corresponds to the simple fact that every graph with maximum degree $2$ that has a vertex of degree $1$ must have another degree $1$ vertex. Hence, $\PPA$ is defined as the class of all problems in $\TFNP$ that can be efficiently reduced to the $\LEAF$ problem, in which given a succinct representation of a graph with maximum degree $2$ and given a vertex of degree $1$ in the graph, the goal is to find another such vertex.
The class $\PPAD$ (Polynomial Parity Argument in Directed graphs) is defined similarly with respect to directed graphs.
Another complexity class defined in~\cite{Papa94} is $\PPP$ (Polynomial Pigeonhole Principle) whose underlying mathematical argument is the pigeonhole principle. Additional examples of complexity classes defined in this way are $\PLS$ (Polynomial Local Search)~\cite{JohnsonPY88}, $\CLS$ (Continuous Local Search)~\cite{DaskalakisP11}, and $\EOPL$ (End of Potential Line)~\cite{FearnleyGMS19}.

The complexity class $\PPAD$ is known to perfectly capture the complexity of many important search problems.
Notable examples of $\PPAD$-complete problems are those associated with Sperner's lemma~\cite{Papa94,ChenD09}, the Nash Equilibrium theorem~\cite{ChenDT09,DaskalakisGP09}, the Envy-Free Cake Cutting theorem~\cite{DengQS12}, and the Hairy Ball theorem~\cite{GoldbergH19}.
For $\PPA$, the undirected analogue of $\PPAD$, until recently no `natural' complete problems were known, where by `natural' we mean that their definitions do not involve circuits and Turing machines. In the last few years, the situation was changed following a breakthrough result of Filos-Ratsikas and Goldberg~\cite{FG18,FG19}, who proved that the Consensus Halving problem with an inverse-polynomial precision parameter is $\PPA$-complete (see also~\cite{FHSZ20}) and used it to derive the $\PPA$-completeness of the classical Splitting Necklace with two thieves and Discrete Sandwich problems. This was obtained building on the $\PPA$-hardness, proved by Aisenberg, Bonet, and Buss~\cite{AisenbergBB20}, of the search problem associated with Tucker's lemma. The variant of the problem that corresponds to the Octahedral Tucker lemma was suggested for study by P{\'{a}}lv{\"{o}}lgyi~\cite{Palvolgyi09} and proved to be $\PPA$-complete by Deng, Feng, and Kulkarni~\cite{DengFK17}.
The $\PPA$-completeness of the Consensus Halving problem was improved to a constant precision parameter in a recent work of Deligkas, Fearnley, Hollender, and Melissourgos~\cite{DeligkasFHM22}.
Additional examples of $\PPA$-complete problems can be found, for instance, in the works of Belovs et al.~\cite{BelovsIQSY17}, Schnider~\cite{Schnider21a}, and Deligkas et al.~\cite{DeligkasFM22}.

\subsection{Our Contribution}

The present work initiates the study of the complexity of finding independent sets that fairly represent a given partition of the vertex set of a cycle.
It is motivated by the computational aspects of combinatorial existence statements, such as the `cycle plus triangles' conjecture of Du et al.~\cite{DHH93} proved by Fleischner and Stiebitz~\cite{FleischnerS92} and its extensions by Aharoni et al.~\cite{AABCKLZ17}, Alishahi and Meunier~\cite{AlishahiM17}, and Black et al.~\cite{BCFPS20}.
As mentioned before, the challenge of understanding the complexity of the corresponding search problems was explicitly raised by several authors, including Fleischner and Stiebitz~\cite{FleischnerS97}, Alon~\cite{AlonChallenges02}, and Aharoni et al.~\cite{AABCKLZ17}. In this work we demonstrate that this research avenue may illuminate interesting connections between this family of problems and the complexity class $\PPA$. As an application, we determine the complexity of finding a monochromatic edge in Schrijver graphs colored by fewer colors than the chromatic number.

We start by introducing the Fair Independent Set in Cycle Problem, which we denote by $\FISC$ and define as follows.
\begin{definition}[Fair Independent Set in Cycle Problem]\label{def:FISC}
In the $\FISC$ problem, the input consists of a cycle $G$ and a partition $V_1, \ldots ,V_m$ of its vertex set into $m$ sets. The goal is to find an independent set $S$ of $G$ satisfying $|S \cap V_i| \geq \frac{1}{2} \cdot |V_i|-1$ for all $i \in [m]$.
\end{definition}
\noindent
The existence of a solution to every input of $\FISC$ is guaranteed by a result of Aharoni et al.~\cite[Theorem~1.8]{AABCKLZ17}.
Since such a solution can be verified in polynomial running-time, the total search problem $\FISC$ lies in the complexity class $\TFNP$.
We prove that the class $\PPA$ captures the complexity of the problem.
\begin{theorem}\label{thm:FISC}
The $\FISC$ problem is $\PPA$-complete.
\end{theorem}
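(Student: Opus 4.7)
The statement has two directions: containment of $\FISC$ in $\PPA$, and $\PPA$-hardness of $\FISC$. The plan is to treat them separately, roughly following the topological existence proof for the upper bound and reducing from a known $\PPA$-complete combinatorial problem for the lower bound.

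For containment in $\PPA$, the strategy is to convert the existence argument of Aharoni et al.\ into a polynomial-time reduction from $\FISC$ to a problem already known to lie in $\PPA$. The natural target is $\OTucker$, shown to be $\PPA$-complete by Deng, Feng, and Kulkarni, because the existence proof rests on Schrijver's chromatic-number theorem whose combinatorial content is captured by the octahedral Tucker lemma. Given an instance $(G, V_1,\ldots,V_m)$ of $\FISC$ with $G$ an $n$-cycle, I would label every nonzero sign vector $\sigma \in \{-,0,+\}^n$ by reading off a candidate partial independent set from the positive coordinates of $\sigma$ together with an index $i \in [m]$ certifying that it fails to fairly represent $V_i$; this mirrors the contrapositive of the Aharoni et al.\ argument. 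The labeling is $\Z_2$-antisymmetric by construction, and any complementary pair returned by an $\OTucker$ oracle can be decoded in polynomial time into a fair independent set of $G$. An alternative route, perhaps cleaner, is first to reduce $\FISC$ to the search version of the Schrijver monochromatic-edge problem and then to $\OTucker$; this directly parallels the existence proof and will also be useful for the application to Schrijver graphs stated in the paper.

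For $\PPA$-hardness, the plan is to reduce from the two-thief discrete \textsc{Necklace Splitting} problem (equivalently, from $\ConHalv$ with inverse-polynomial precision) shown to be $\PPA$-complete by Filos-Ratsikas and Goldberg. A cycle with partition $V_1,\ldots,V_m$ is a closed necklace whose beads are colored by their partition class, and a fair independent set plays the role of the set of beads assigned to one of the two thieves, with independence forcing the two ``sides'' to alternate along the cycle. To turn a necklace-splitting instance into a $\FISC$ instance I would blow up each bead into a short path gadget such that, in any independent set of the resulting cycle, the number of chosen vertices inside a gadget equals, up to a small additive error absorbed by the ``$-1$'' slack in the definition of $\FISC$, the indicator of which side the original bead is assigned to. A $\FISC$ solution can then be read off, gadget by gadget, as a two-coloring of the necklace pieces yielding a near-balanced splitting of each color.

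The main obstacle I expect is on the hardness side: the gadgets must simultaneously (i) force the alternating independent-set structure around the cycle, (ii) preserve color counts tightly enough that the off-by-one slack of $\FISC$ does not swallow the approximation guarantee of the underlying \textsc{Necklace Splitting} instance, and (iii) respect the parity conditions visible in Theorem~\ref{thm:AlishahiM}, so that the oracle's output is nontrivial and efficiently decodable. A smaller difficulty on the membership side is that Aharoni et al.\ use Schrijver's theorem as a black box; extracting a reduction to $\OTucker$ requires re-opening this black box and tracking the combinatorial labeling explicitly, but this should be a matter of careful bookkeeping rather than a fundamental barrier.
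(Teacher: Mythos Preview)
Your membership argument is essentially the paper's: the ``alternative route'' $\FISC \to \SchrijverP \to \OTucker$ is exactly what the paper carries out (Theorems~\ref{thm:FSplitC2Schrijver} and~\ref{thm:Schrijver2OTucker}), so that half is fine.

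The hardness side has the right source problem but a real gap. The obstacle is not the ``$-1$'' slack or parity; it is the \emph{number of cuts}. A $\FISC$ oracle returns a single independent set $S$ with $|S\cap V_i|\ge \tfrac{1}{2}|V_i|-1$, and nothing in your ``blow up each bead into a short path'' gadget bounds how many maximal $S$-runs appear along the cycle. Interpreting $S$ as ``thief~1's beads'' can therefore yield a splitting with $\Theta(n)$ cuts, useless for $\eps\textsf{-}\ConHalv(m,m+c)$. Note also that the $-1$ slack is one unit \emph{per color class}, not per bead, so it cannot absorb per-gadget rounding as you suggest.

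The paper's reduction handles this with two ideas you do not mention. First, after discretizing the measures into ``real'' vertices, it inserts a separator vertex between every two consecutive real vertices and places \emph{all} separators in one new class $V_{m+1}$, so the path alternates between real vertices and $V_{m+1}$. Second (Corollary~\ref{cor:FISC_PPAhard}), given a $\FISC$ solution $S_1$ on the resulting cycle, trimmed so that $|S_1\cap V_i|=(|V_i|-1)/2$ using that every $|V_i|$ is odd, it sets $S_2$ to be the one-step cyclic shift of $S_1$. Then $S_1,S_2$ are disjoint independent sets covering all but exactly $m+1$ vertices; those $m+1$ uncovered vertices become the cuts, and on each arc between consecutive cuts the $V_{m+1}$ alternation forces all real vertices to lie in exactly one of $S_1,S_2$. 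This is the mechanism that turns a single fair independent set into a bounded-cut halving, and it is missing from your sketch.
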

\noindent
In view of the `cycle plus triangles' problem of Du et al.~\cite{DHH93}, it would be interesting to understand the complexity of the $\FISC$ problem restricted to partitions into sets of size $3$. While Theorem~\ref{thm:FISC} immediately implies that this restricted problem lies in $\PPA$, the question of determining its precise complexity remains open.

We proceed by considering the search problem associated with Theorem~\ref{thm:AlishahiM}.
In the Fair Splitting of Cycle Problem, denoted $\FSplitC$, we are given a cycle and a partition of its vertex set and the goal is to find {\em two} disjoint independent sets that fairly represent the partition and cover all vertices but one from every part of the partition. We define below an approximate version of this problem, in which the fairness requirement is replaced with the relaxed notion of $\eps$-fairness, where the independent sets should include at least $\frac{1}{2}-\eps$ fraction of the vertices of every part.

\begin{definition}[Approximate Fair Splitting of Cycle Problem]\label{def:FSplitC}
In the $\eps\textsc{-}\FSplitC$ problem with parameter $\eps \geq 0$, the input consists of a cycle $G$ on $n$ vertices and a partition $V_1, \ldots ,V_m$ of its vertex set into $m$ sets, such that $n$ and $m$ have the same parity. The goal is to find two disjoint independent sets $S_1$ and $S_2$ of $G$ covering all vertices but one from each $V_i$ such that for each $j \in \{1,2\}$, it holds that $|S_j \cap V_i| \geq (\frac{1}{2}-\eps) \cdot |V_i|-1$ for all $i \in [m]$. For $\eps=0$, the problem is denoted by $\FSplitC$.
\end{definition}
\noindent
The existence of a solution to every input of $\eps\textsc{-}\FSplitC$, already for $\eps=0$, is guaranteed by Theorem~\ref{thm:AlishahiM} proved in~\cite{AlishahiM17}.
For $\eps=0$, it can be seen that $\FSplitC$ is at least as hard as $\FISC$ (see Lemma~\ref{lem:FISC2FSplitC}). Yet, it turns out that $\FSplitC$ lies in $\PPA$ and is thus also $\PPA$-complete.

\begin{theorem}\label{thm:FSplitC}
The $\FSplitC$ problem is $\PPA$-complete.
\end{theorem}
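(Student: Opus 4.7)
The plan is to prove Theorem~\ref{thm:FSplitC} in two directions. The $\PPA$-hardness direction follows from Theorem~\ref{thm:FISC} by a reduction from $\FISC$ to $\FSplitC$. Given an $\FISC$ instance $(G,V_1,\ldots,V_m)$ on $n$ vertices, I would first adjust the input to satisfy the parity requirement $n \equiv m \pmod{2}$ of $\FSplitC$, either by inserting a short padding segment into the cycle and placing its vertices in a new part (toggling $m$) or in an existing part (toggling $n$). The padding is designed so that any solution $(S_1,S_2)$ to the padded $\FSplitC$ instance yields, upon restriction to the original vertex set, an independent set $S$ satisfying $|S \cap V_i| \geq \frac{1}{2}|V_i| - 1$ for every $i \in [m]$.

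For membership in $\PPA$, the plan is to reduce $\FSplitC$ to $\OTucker$, which is $\PPA$-complete by Deng, Feng, and Kulkarni~\cite{DengFK17}. The goal is to constructivise the topological proof of Theorem~\ref{thm:AlishahiM} by Alishahi and Meunier~\cite{AlishahiM17}. Vertices of the octahedral subdivision correspond to pairs $(A^+,A^-)$ of disjoint subsets of $V(G)$, with the antipodal action swapping $A^+$ and $A^-$. I would decode each signed subset into a candidate pair of disjoint independent sets by scanning the cycle in cyclic order and greedily selecting vertices from $A^+$ and $A^-$ while respecting adjacency, and then assign a Tucker label that records the first part $V_i$ for which a fairness or covering requirement fails, with a sign indicating whether $A^+$ or $A^-$ is responsible. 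The labeling is antipodal by construction, and the Octahedral Tucker lemma provides an adjacent pair of signed subsets with complementary labels; decoding that pair yields two disjoint independent sets satisfying the fairness and covering conditions of $\FSplitC$.

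The main obstacle is to design the decoding and labeling rules so that they are polynomial-time computable and so that complementary labels on an edge of the octahedral subdivision really certify a valid $\FSplitC$ solution rather than only an approximate one. Two points are delicate. First, the parity assumption $n \equiv m \pmod{2}$ in Definition~\ref{def:FSplitC} must correspond precisely to the orientation condition that makes the Tucker reduction go through; otherwise the resulting $\OTucker$ instance could carry a trivial parity obstruction unrelated to $\FSplitC$. Second, the $-1$ slack in the fairness condition together with the ``all vertices but one from each $V_i$'' covering requirement must be matched to the slack that arises when two adjacent signed subsets of the octahedral subdivision differ in exactly one coordinate; verifying that this matching produces the exact guarantee of Definition~\ref{def:FSplitC}, rather than something weaker akin to $\eps\textsc{-}\FSplitC$ with positive $\eps$, is the most technical ingredient of the argument.
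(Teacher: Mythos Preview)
Your high-level strategy matches the paper: $\PPA$-hardness by reducing $\FISC$ to $\FSplitC$, and membership in $\PPA$ by reducing $\FSplitC$ to $\OTucker$ along the lines of Alishahi and Meunier~\cite{AlishahiM17}. The differences are in the details, and in one place your plan takes an unnecessary detour.

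For hardness, the paper's parity fix is simpler than your padding idea. If $n$ and $m$ have different parity, then $\sum_i |V_i| = n$ forces some $V_i$ to have even size; insert a single new vertex on an edge of $G$ and place it in that $V_i$. This changes $n$ by one and $m$ not at all, restoring the parity condition. From a $\FSplitC$ solution $(S_1,S_2)$, at least one of $S_1,S_2$ avoids both neighbors of the inserted vertex, and its restriction to $V(G)$ is an independent set in $G$ meeting the $\FISC$ bound for every part (the enlarged $V_i$ loses at most one element, and $\tfrac{1}{2}(|V_i|+1)-1 \geq \tfrac{1}{2}|V_i|-1$). No new part, no multi-vertex segment, no case analysis on ``toggling $m$ versus $n$'' is needed.

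For membership, your ``greedy decoding'' step is the main detour, and it is where your acknowledged obstacles come from. The paper (following~\cite{AlishahiM17}) does not decode a signed vector $x \in \{+,-,0\}^n$ into a pair of independent sets at all. Instead it labels $x$ directly: if the set $J(x)$ of parts $V_i$ already ``overfilled'' by $x^+$ or $x^-$ is empty, set $\lambda(x) = \pm\,\alt(x)$ with sign given by the first nonzero entry; otherwise take the largest $i \in J(x)$ and set $\lambda(x) = \pm(i+n-m-1)$. An $\OTucker$ solution $x \preceq y$ with $\lambda(x) = -\lambda(y)$ then forces $J(x)=\emptyset$, $J(y)\neq\emptyset$, and $\alt(x) \geq n-m$. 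The point you are missing is that this already finishes the argument: $|x^+ \cup x^-| \leq n-m$ from $J(x)=\emptyset$, together with $\alt(x) \geq n-m$, gives $|x^+ \cup x^-| = \alt(x) = n-m$, so $x^+$ and $x^-$ themselves alternate along $[n]$ and, since $n-m$ is even by the parity hypothesis, are independent sets in the cycle covering exactly $|V_i|-1$ elements of each $V_i$. The $-1$ slack and the parity condition thus fall out of the labeling automatically; there is no approximation issue and no need for a separate decoding rule whose antipodality you would then have to verify.
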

\noindent
In fact, using the recent work~\cite{DeligkasFHM22}, we also obtain the following $\PPA$-completeness result for the approximate version of the problem (see Remark~\ref{remark:0.1}).

\begin{theorem}\label{thm:FSplitC_eps}
There exists a constant $\eps>0$ for which the $\eps\textsf{-}\FSplitC$ problem is $\PPA$-complete.
\end{theorem}

We finally consider the complexity of the $\SchrijverP$ problem. In this problem we are given a succinct representation of a coloring of the Schrijver graph $S(n,k)$ with $n-2k+1$ colors, which is one less than its chromatic number~\cite{SchrijverKneser78}, and the goal is to find a monochromatic edge (see Definition~\ref{def:Schrijver}).
The study of the $\SchrijverP$ problem is motivated by a question raised by Deng et al.~\cite{DengFK17} regarding the complexity of the analogue $\KneserP$ problem for Kneser graphs.
Note that the latter is not harder than the $\SchrijverP$ problem, because $S(n,k)$ is a subgraph of $K(n,k)$ with the same chromatic number.
As an application of our Theorem~\ref{thm:FISC}, we prove the following.

\begin{theorem}\label{thm:SchrijverP}
The $\SchrijverP$ problem is $\PPA$-complete.
\end{theorem}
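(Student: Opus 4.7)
The plan is to show both containment in $\PPA$ and $\PPA$-hardness of $\SchrijverP$.

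For membership, I would reduce $\SchrijverP$ to the Octahedral-Tucker problem, shown to be $\PPA$-complete in \cite{DengFK17}. This is a polynomial-time, circuit-level implementation of the topological proof (see, e.g.,~\cite{Matousek04,Ziegler02}) that $\chi(S(n,k)) = n-2k+2$. Given a circuit computing a coloring $c$ of $S(n,k)$ with $n-2k+1$ colors, one defines a $\Z_2$-antipodal labeling $\lambda$ on $\{-,0,+\}^n \setminus \{0^n\}$ by, for each such $x$, selecting a canonical Schrijver vertex $A(x)$ from the positions of the $\pm 1$'s of $x$ (e.g.,~the top $k$ positions in a fixed cyclic-safe order), and setting $\lambda(x) = \pm c(A(x))$ with a sign determined by a designated coordinate of $x$. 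By construction $\lambda(-x) = -\lambda(x)$, and any complementary edge returned by the Octahedral-Tucker oracle produces two signed subsets whose associated Schrijver vertices are vertex-disjoint and share a color, i.e.,~a monochromatic edge of $c$.

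For $\PPA$-hardness, I would reduce from $\FISC$ (which is $\PPA$-complete by Theorem~\ref{thm:FISC}) by efficiently implementing the existence proof of Aharoni et al.~\cite{AABCKLZ17}. Given an instance $(C_n, V_1,\ldots,V_m)$ of $\FISC$, pad $C_n$ slightly if needed to fix parities and pick $k$ near $n/2 - m$ so that $\chi(S(n,k)) = n - 2k + 2 \geq 2m + 2$. Build a circuit computing a coloring $c : V(S(n,k)) \to \{0\} \cup [m]\times\{+,-\}$ (which uses at most $2m+1 \leq n-2k+1$ colors): set $c(A) = 0$ if the $k$-independent set $A$ already fairly represents the partition, and otherwise $c(A) = (i(A),\sigma(A))$, where $i(A)$ is the smallest index with $|A \cap V_i| < \frac{1}{2}|V_i|-1$ and $\sigma(A) \in \{+,-\}$ encodes a parity of the cyclic interleaving of $V_{i(A)} \setminus A$ along $C_n$. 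Feeding this circuit to a $\SchrijverP$ oracle yields a monochromatic edge $\{A_1, A_2\}$: if the common color is $0$ then $A_1$ is already a $\FISC$ solution, and if it is $(i,\sigma)$ then $A_1$ and $A_2$ are combined via the shared sign into a fair $k$-independent set.

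The main obstacle is the explicit design of $\sigma$ and the extraction procedure in the nonzero-color case: the sign must be computable in polynomial time from $A$ alone, and two vertex-disjoint $k$-independent sets sharing the same $(i,\sigma)$ must admit an efficient combinatorial merge that outputs a fair independent set. This is precisely the step where the Borsuk-Ulam argument behind the existence proof must be rewritten as an algorithm. The use of Schrijver rather than Kneser graphs is crucial here: forbidding cyclically consecutive elements in each $k$-subset forces the excluded patterns $V_i \setminus A_1$ and $V_i \setminus A_2$ to interleave along $C_n$ in a controlled cyclic way, which is exactly what makes the merge work.
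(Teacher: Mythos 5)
Your high-level plan (reduce $\SchrijverP$ to $\OTucker$ for membership; reduce $\FISC$ to $\SchrijverP$ for hardness) is exactly the paper's plan, but both of your reduction sketches miss the mechanisms that make the paper's arguments actually close, and the hardness direction has a gap you yourself acknowledge.

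On membership: the device ``pick a canonical Schrijver vertex $A(x)$ from the top $k$ positions in a cyclic-safe order and set $\lambda(x) = \pm c(A(x))$ with a sign from a designated coordinate'' does not work. There are two problems. First, a vector $x$ may simply not have $k$ $+$'s (or $k$ $-$'s) arranged stably, so there may be no $k$-stable subset of $\{i : x_i = +\}$ to pick; you need a separate labeling regime for ``short'' vectors. Second, and more fundamentally, a single color with a coordinate-based sign does not let you extract a \emph{monochromatic edge} from a complementary pair $x \preceq y$, $\lambda(x) = -\lambda(y)$: you need disjointness of the two Schrijver vertices, and that disjointness has to come from the structure of $x$, not from an arbitrary tie-break. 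The paper's reduction (Theorem~\ref{thm:Schrijver2OTucker}, following Meunier) instead extracts from $x$ the longest alternating subsequence, truncated to length $2k$: $z = A_{2k}(x)$ yields \emph{two} disjoint stable $k$-sets $z^+$ and $z^-$ simultaneously, and $\lambda(x)$ is defined by comparing $c(z^+)$ with $c(z^-)$ (with a fallback case when $\alt(x) < 2k$ using $\pm\alt(x)$). The alternation is what guarantees stability and disjointness, and the comparison is what lets $x \preceq y$ with $\lambda(x) = -\lambda(y)$ deliver $c(A_{2k}(x)^+) = c(A_{2k}(y)^-)$ with $A_{2k}(x)^+ \cap A_{2k}(y)^- = \emptyset$.

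On hardness: your coloring with roughly $2m+1$ colors and a sign $\sigma(A)$, plus a ``combinatorial merge'' of two vertices sharing a color $(i,\sigma)$, is over-engineered, and you correctly flag the merge as the unproved step --- that is a genuine gap. The paper's reduction (Theorem~\ref{thm:FSplitC2Schrijver}, following Aharoni et al.) avoids the merge entirely by choosing parameters so that \emph{no merge is needed}. Trim each $V_i$ to odd size $|V'_i| = 2r_i + 1$, set $k = \sum_i r_i$, so $n = 2k + m$ and $\chi(S(n,k)) = m+2$. Color a stable $k$-set $S$ by the smallest $i$ with $|S \cap V'_i| > r_i$, and use the extra color $m+1$ if no such $i$ exists. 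The point is that two \emph{disjoint} $k$-sets $S_1, S_2$ cannot both overshoot $r_i$ in a part of size $2r_i+1$, so a monochromatic edge must carry color $m+1$; and then $|S_j \cap V'_i| \leq r_i$ for all $i$ together with $|S_j| = k = \sum r_i$ forces equality everywhere, so $S_1$ is already a fair independent set. No sign, no merge, and only $m+1$ colors are needed. Your choice of $k$ near $n/2 - m$ (to make room for $2m+1$ colors) is also off from the paper's $k = (n-m)/2$; the paper's tight choice of $k$ is what makes the cardinality-counting argument snap shut.
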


It would be interesting to determine the computational complexity of the $\KneserP$ problem and to decide whether it is $\PPA$-complete, as suggested in~\cite{DengFK17}. It would also be interesting to prove unconditional lower bounds on the query complexity of algorithms for the $\KneserP$ and $\SchrijverP$ problems in the black-box input model, where the input is given as an oracle access.
We note that the study of the $\KneserP$ problem is motivated by its connections to a resource allocation problem called Agreeable Set, that was introduced by Manurangsi and Suksompong~\cite{ManurangsiS19} and further studied in~\cite{GoldbergHIMS20,Haviv22a}.
From an algorithmic point of view, it was shown in the recent works~\cite{Haviv22a,Haviv22b} that there exist randomized algorithms for the $\KneserP$ and $\SchrijverP$ problems with running time $n^{O(1)} \cdot k^{O(k)}$ on graphs $K(n,k)$ and $S(n,k)$ respectively, hence these problems are fixed-parameter tractable with respect to the parameter $k$. It would be nice to further explore algorithms for these problems as well as for the other problems studied in the current work.

\subsection{Overview of Proofs}

To obtain our results we present a chain of reductions, as described in Figure~\ref{fig:reductions}.
Our starting point is the Consensus Halving problem with precision parameter $\eps$, in which given a collection of $m$ probability measures on the interval $[0,1]$ the goal is to partition the interval into two pieces using relatively few cuts, so that each of the measures has the same mass on the two pieces up to an error of $\eps$ (see Definition~\ref{def:ConHal}).
It is known that every instance of this problem has a solution with at most $m$ cuts even for $\eps=0$~\cite{SimmonsS03} (see also~\cite{GW85,AW86}) and that the problem of finding such a solution is $\PPA$-hard for some constant $\eps >0$~\cite{DeligkasFHM22}.
\begin{figure}[t!]
\begin{center}
\includegraphics[width=3.3in]{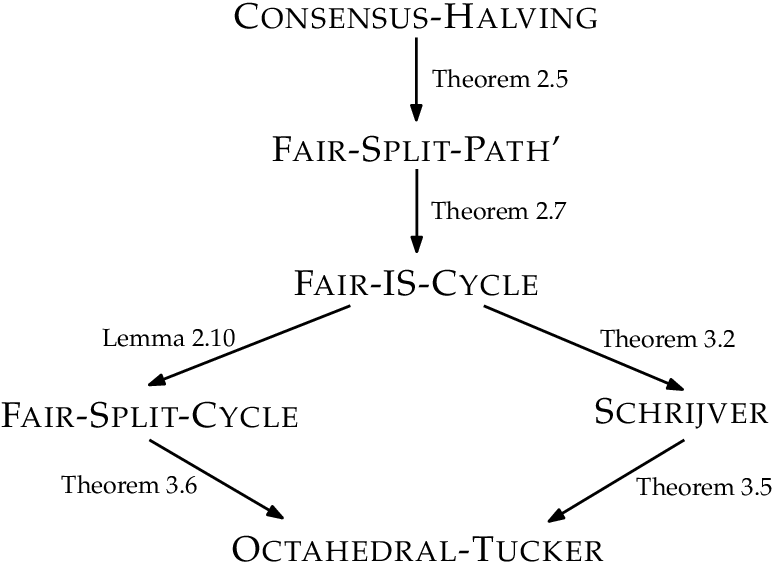}
\end{center}
\caption{Chain of reductions.}
\label{fig:reductions}
\end{figure}

In Section~\ref{sec:FISC}, we reduce the Consensus Halving problem to an intermediate variant of the $\eps\textsf{-}\FSplitC$ problem, which we call $\eps\textsf{-}\FSplitP'$ (see Definition~\ref{def:FSplitP'}). Then, we use this reduction to obtain our hardness results for the $\FISC$ and $\FSplitC$ problems.
The reduction borrows a discretization argument that was used in~\cite{FG18} to reduce the Consensus Halving problem to the Splitting Necklace problem with two thieves.
This argument enables us to transform a Consensus Halving instance into a path and a partition of its vertex set, for which the goal is to partition the path using relatively few cuts into two parts, each of which contains roughly half of the vertices of every set in the partition. In order to relate this problem to independent sets that fairly represent the partition, we need an additional simple trick. Between every two consecutive vertices of the path we add a new vertex and put all the new vertices in a new set added to the partition of the vertex set.
We then argue, roughly speaking, that two disjoint independent sets in the obtained path, which fairly represent the partition and cover almost all of the vertices, can be used to obtain a solution to the original instance. The high-level idea is that those few vertices that are uncovered by the two independent sets can be viewed as cuts, and every path between two such vertices alternates between the two given independent sets. By construction, it means that only one of the two independent sets contains in such a path original vertices (that is, vertices that were not added in the last phase of the reduction), hence every such path can be naturally assigned to one of the two pieces required by the Consensus Halving problem.
Combining our reduction with the known hardness results of Consensus Halving, we derive the $\PPA$-hardness of $\FISC$ and of $\eps\textsf{-}\FSplitC$ for a constant $\eps >0$, as needed for Theorems~\ref{thm:FISC},~\ref{thm:FSplitC}, and~\ref{thm:FSplitC_eps}.

In Section~\ref{sec:Schrijver}, we introduce and study the $\SchrijverP$ problem.
We reduce the $\FISC$ problem to the $\SchrijverP$ problem, implying that the latter is $\PPA$-hard.
The reduction follows an argument of Aharoni et al.~\cite{AABCKLZ17} who used the chromatic number of the Schrijver graph~\cite{SchrijverKneser78} to prove the existence of the independent set required in $\FISC$.
Finally, employing arguments of Meunier~\cite{Meunier11} and Alishahi and Meunier~\cite{AlishahiM17}, we reduce the $\SchrijverP$ and $\FSplitC$ problems to the search problem associated with the Octahedral Tucker lemma (see Definition~\ref{def:OTucker}). Since it is known, already from~\cite{Papa94}, that this problem lies in $\PPA$, we get that $\FISC$, $\FSplitC$, and $\SchrijverP$ are all members of $\PPA$, completing the proofs of Theorems~\ref{thm:FISC}, \ref{thm:FSplitC}, \ref{thm:FSplitC_eps}, and~\ref{thm:SchrijverP}.

We remark that one could consider analogues of the $\FISC$ and $\FSplitC$ problems for paths rather than for cycles and obtain similar results.
We have chosen to focus here on the cycle setting, motivated by the computational aspects of the `cycle plus triangles' problem~\cite{DHH93,ErdosQuestions90,FleischnerS92}.

\section{Fair Independent Sets in Cycles}\label{sec:FISC}

In this section we prove our hardness results for the $\FISC$ and $\FSplitC$ problems.
We first recall the definition of the Consensus Halving problem and state its hardness result from~\cite{DeligkasFHM22}.
Then, we present an efficient reduction from this problem to an intermediate problem, which is used to obtain the hardness results of Theorems~\ref{thm:FISC},~\ref{thm:FSplitC}, and~\ref{thm:FSplitC_eps}.

\subsection{Consensus Halving}

Consider the following variant of the Consensus Halving problem, denoted $\ConHalv$.
\begin{definition}[Consensus Halving Problem]\label{def:ConHal}
In the $\eps\textsf{-}\ConHalv(m,\ell)$ problem with precision parameter $\eps = \eps(m)$, the input consists of $m$ probability measures $\mu_1, \ldots, \mu_m$ on the interval $I=[0,1]$, given by their density functions.
The goal is to partition the interval $I$ using at most $\ell$ cuts into two (not necessarily connected) pieces $I^+$ and $I^-$, so that for every $i \in [m]$ it holds that $| \mu_i(I^+) - \mu_i(I^-)| \leq \eps$.
\end{definition}
\noindent
For $\ell \geq m$, every input of $\eps\textsf{-}\ConHalv(m,\ell)$ has a solution even for $\eps=0$~\cite{SimmonsS03}.
We will rely on the hardness result of $\ConHalv$ stated below.
Here, a function on an interval is said to be {\em piecewise constant} if its domain can be partitioned into a finite set of intervals such that the function is constant on each of them. We refer to the intervals of the partition on which the function is nonzero as the {\em blocks} of the function.
Note that a piecewise constant function can be explicitly represented by the endpoints and the values of its blocks.

\begin{theorem}[{\cite[Theorem~1.3]{DeligkasFHM22}}]\label{thm:CH_PPA_hard}
There exists a constant $\eps >0$ such that for every constant $c \geq 0$, the $\eps\textsf{-}\ConHalv(m,m+c)$ problem, restricted to inputs with piecewise constant density functions with at most $3$ blocks, is $\PPA$-hard.
\end{theorem}

\begin{remark}
We note that, as explained in~\cite{FHSZ20}, the constant $c$ given in Theorem~\ref{thm:CH_PPA_hard} can be replaced by $m^{1-\alpha}$ for any constant $\alpha >0$. This stronger hardness, however, is not required to obtain our results.
We also note that our results do not rely on the fact that the hardness given in Theorem~\ref{thm:CH_PPA_hard} holds for instances with density functions with at most $3$ blocks, as proved in~\cite{DeligkasFHM22}, rather than polynomially many blocks.
\end{remark}

\subsection{The Main Reduction}

To obtain our hardness results for the $\FISC$ and $\FSplitC$ problems, we consider the following intermediate problem.
\begin{definition}\label{def:FSplitP'}
In the $\eps\textsc{-}\FSplitP'$ problem with parameter $\eps \geq 0$, the input consists of a path $G$ and a partition $V_1, \ldots ,V_m$ of its vertex set into $m$ sets such that $|V_i|$ is odd for all $i \in [m]$. The goal is to find two disjoint independent sets $S_1$ and $S_2$ of $G$ covering all but at most $m$ of the vertices of $G$ such that
\[ |S_1 \cap V_i| \in \Big [  (\tfrac{1}{2}-\eps  ) \cdot |V_i|-1, (\tfrac{1}{2}+\eps ) \cdot |V_i| \Big ]\] for all $i \in [m]$.
When $\eps=0$, the problem is denoted by $\FSplitP'$.
\end{definition}
\noindent
Note that the $\eps\textsc{-}\FSplitP'$ problem differs from the $\eps\textsc{-}\FSplitC$ problem (see Definition~\ref{def:FSplitC}) in the following respects: (a) The input graph is a path rather than a cycle, (b) an $\eps$-fairness property is required only for the independent set $S_1$ rather than for both $S_1$ and $S_2$, (c) there is no requirement regarding the sets $V_i$ to which the vertices that are uncovered by $S_1$ and $S_2$ belong, and (d) the sets $V_i$ are required to be of odd sizes.
Yet, every instance of the $\eps\textsc{-}\FSplitP'$ problem has a solution already for $\eps=0$, as follows from Theorem~\ref{thm:AlishahiM} applied to the cycle obtained by connecting the endpoints of the given path by an edge.

We turn to prove the following.
\begin{theorem}\label{thm:MainReduction}
Let $p$ be a polynomial and suppose that $\eps = \eps(m)$ is bounded from below by some inverse-polynomial in $m$.
Then, for any constant $\alpha \in [0,1)$, the $\eps\textsf{-}\ConHalv(m,m+1)$ problem, restricted to inputs with piecewise constant density functions with at most $p(m)$ blocks, is polynomial-time reducible to the $\frac{\alpha \cdot \eps}{2}\textsc{-}\FSplitP'$ problem.
\end{theorem}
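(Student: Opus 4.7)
The plan is to follow the discretization blueprint of~\cite{FG18}. Given an instance of $\eps\textsf{-}\ConHalv(m,m+1)$ with piecewise constant densities having at most $p(m)$ blocks, I would first refine the common set of block-endpoints into a grid with $K \leq m \cdot p(m)$ cells on which every $\mu_i$ is constant. Next, for an integer parameter $N = \Theta((m+K)/\eps)$ chosen so that the final discretization error is at most $\eps/4$, I would place $\lfloor N \cdot \mu_i(C)\rfloor$ points of type $i$ in each grid cell $C$ for every $i \in [m]$, and then adjust by at most one point per type so that the total number $n_i$ of type-$i$ points is odd. Ordering all placed points from left to right along $[0,1]$ gives a path on vertices $v_1, \ldots, v_n$, with $V_i$ being the set of type-$i$ points (so $|V_i|=n_i$). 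Finally, I would insert a dummy vertex $d_j$ between each consecutive pair $v_j, v_{j+1}$, collect all dummies into a new part $V_{m+1}$, and if necessary remove or append a single dummy to make $|V_{m+1}|$ odd. The resulting input is a valid instance of $\frac{\eps}{4}\textsc{-}\FSplitP'$ on $m+1$ odd-sized parts, which I would feed to the oracle.

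Given a solution $(S_1, S_2)$ to this instance, I would recover a $\ConHalv$ solution as follows. At most $m+1$ vertices are uncovered, so the path decomposes into at most $m+2$ maximal ``runs'' of covered vertices, and inside each run the membership alternates between $S_1$ and $S_2$. The whole point of inserting the dummies is that inside any covered run the original vertices occupy positions of a single parity of the run, so \emph{every} original vertex in the run lies in the same $S_b$. I would then label the sub-interval of $[0,1]$ spanned by that run as $I^+$ when $b=1$ and as $I^-$ when $b=2$, assigning arbitrarily the portion of $[0,1]$ associated with each uncovered vertex. Because the sign along $[0,1]$ changes only at positions hosting uncovered vertices, the resulting partition uses at most $m+1$ cuts, as required by $\ConHalv(m,m+1)$.

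To bound $|\mu_i(I^+) - \mu_i(I^-)|$ for each $i \in [m]$, let $a = |S_1 \cap V_i|$ and let $c$ denote the number of uncovered type-$i$ vertices (assigned, say, to $I^+$). The type-$i$ point counts on the two sides are then $a+c$ and $n_i - a - c$. The $\frac{\eps}{4}$-fairness bound $a \in [(\frac12 - \frac{\eps}{4})\,n_i - 1,\, (\frac12 + \frac{\eps}{4})\,n_i]$ yields $|(a+c)-(n_i-a-c)| \leq \frac{\eps}{2}\cdot n_i + 2(m+2)$. Dividing by $n_i$ and using the inequality $n_i \geq 8(m+2)/\eps$, which holds by our choice of $N$, bounds the empirical discrepancy by $\frac{3\eps}{4}$; combined with the grid discretization error of at most $\eps/4$ and the triangle inequality, this gives $|\mu_i(I^+) - \mu_i(I^-)| \leq \eps$. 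Since all parameters are polynomial in the input size and in $1/\eps$, the reduction runs in polynomial time.

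The main technical obstacle will be the simultaneous coordination of the discretization parameters --- the rounding needed to make each $n_i$ odd, the grid refinement needed to ensure the empirical cell mass is within $\eps/4$ of the true mass $\mu_i(C)$, and the magnitude of $N$ needed to absorb the additive $O(m)$ error coming from the uncovered vertices. A second delicate point is the alternation argument in the case where an uncovered vertex happens to be an original vertex rather than a dummy: one must verify that assigning that vertex's sub-interval to either side of the $\ConHalv$ partition contributes at most one new sign change, so that the total number of cuts never exceeds the number of uncovered vertices.
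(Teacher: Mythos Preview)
Your proposal is correct and follows essentially the same construction and analysis as the paper: discretize each $\mu_i$ into equally weighted vertices, interleave them with dummy vertices forming $V_{m+1}$, read off at most $m+1$ cuts from the uncovered vertices, and combine the $\tfrac{\eps}{4}$-fairness of $S_1$ with the discretization loss to bound $|\mu_i(I^+)-\mu_i(I^-)|$. One small slip: to fix the parity of $|V_{m+1}|$ you should only \emph{append} a dummy (as the paper does), never remove one, since deleting $d_j$ makes the original vertices $v_j,v_{j+1}$ adjacent and breaks the key step ``every original vertex in a covered run lies in the same $S_b$''.
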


\begin{proof}
Consider an instance of $\eps\textsc{-}\ConHalv(m,m+1)$ consisting of $m$ probability measures $\mu_1, \ldots, \mu_m$ on the interval $I=[0,1]$, given by their piecewise constant density functions $g_1, \ldots, g_m$, each of which has at most $p(m)$ blocks.
Fix any constant $\alpha \in [0,1)$.
The reduction constructs an instance of $\frac{\alpha \cdot \eps}{2}\textsc{-}\FSplitP'$, namely, a path $G$ and a partition $V_1, \ldots, V_{m+1}$ of its vertex set into $m+1$ sets of odd sizes.

We start with a high-level description of the reduction.
First, borrowing a discretization argument of~\cite{FG18}, the reduction associates with every density function $g_i$ a collection $V_i$ of vertices located in the (at most $p(m)$) intervals on which $g_i$ is nonzero.
To do so, we partition every block of $g_i$ into sub-intervals such that the measure of $\mu_i$ on each of them is $\delta$, where $\delta>0$ is some small parameter (assuming, for now, that the measure of $\mu_i$ on every block is an integer multiple of $\delta$). At the middle of every such sub-interval we locate a vertex and put it in $V_i$. Then, we construct a path $G$ that alternates between the vertices of $V_1 \cup \cdots \cup V_m$ ordered according to their locations in $I$ and additional vertices which we put in another set $V_{m+1}$. We also take care of the requirement that each $|V_i|$ is odd.

The intuitive idea behind this reduction is the following. Suppose that we are given a solution to the constructed instance, i.e., two disjoint independent sets $S_1$ and $S_2$ of the path $G$ covering all but $m+1$ of the vertices such that $S_1$ contains roughly half of the vertices of $V_i$ for each $i \in [m+1]$.
Observe that by removing from $G$ the $m+1$ vertices that {\em do not} belong to $S_1 \cup S_2$, we essentially get a partition of the vertices of $S_1 \cup S_2$ into $m+2$ paths. Since $S_1$ and $S_2$ are independent sets in $G$, it follows that each such path alternates between $S_1$ and $S_2$. However, recalling that $G$ alternates between $V_1 \cup \cdots \cup V_m$ and $V_{m+1}$, it follows that ignoring the vertices of $V_{m+1}$, each such path contains either only vertices of $S_1$ or only vertices of $S_2$. Now, one can view the $m+1$ locations of the vertices that do not belong to $S_1 \cup S_2$ as cuts in the interval $I$ which partition it into $m+2$ sub-intervals, each of which includes vertices from either $S_1$ or $S_2$ (again, ignoring the vertices of $V_{m+1}$). Let $I^+$ and $I^-$ be the pieces of $I$ obtained from the sub-intervals that correspond to $S_1$ and $S_2$ respectively. Since the number of vertices from $V_i$ in every path is approximately proportional to the measure of $\mu_i$ in the corresponding sub-interval, it can be shown that the probability measure of $\mu_i$ on $I^+$ is approximately $\frac{1}{2}$. This yields that the probability measure $\mu_i$ is approximately equal on the pieces $I^+$ and $I^-$, as needed for the $\ConHalv(m,m+1)$ problem.

We turn to the formal description of the reduction. For an illustration, see Figure~\ref{fig:density}.
Define $\delta = \frac{(1-\alpha) \cdot \eps}{2 \cdot (2p(m)+m+3)}$. The reduction acts as follows.
\begin{enumerate}
  \item For every $i \in [m]$, do the following:
  \begin{itemize}
    \item We are given a partition of the interval $I$ into intervals such that on at most $p(m)$ of them the function $g_i$ is equal to a nonzero value and is zero everywhere else. For every such interval, let $\gamma$ denote the volume of $g_i$ on it, and divide it into $ \lceil \gamma/\delta \rceil$ sub-intervals of volume $\delta$ each, possibly besides the last one whose volume might be smaller. We refer to a sub-interval of volume smaller than $\delta$ as an {\em imperfect} sub-interval. The number of imperfect sub-intervals associated with $g_i$ is clearly at most $p(m)$. At the middle point of every sub-interval of $g_i$, locate a vertex and put it in the set $V_i$.
    \item If the number of vertices in $V_i$ is even, then add another vertex to $V_i$ and locate it arbitrarily in $I$.
    \item Note that, by $\mu_i(I)=1$, we have
    \begin{eqnarray}\label{eq:|V_i|}
    |V_i| \cdot \delta \in [1,1+(p(m)+1) \cdot \delta].
    \end{eqnarray}
  \end{itemize}
  \item Consider the path on the vertices of $V_1 \cup \cdots \cup V_m$ ordered according to their locations in the interval $I$, breaking ties arbitrarily.
  \item Add a new vertex before every vertex in this path, locate it at the middle of the sub-interval between its two adjacent vertices (where the first new vertex is located at $0$), and put these new vertices in the set $V_{m+1}$. If the number of vertices in $V_{m+1}$ is even then add one more vertex to the end of the path, locate it at $1$, and put it in $V_{m+1}$ as well.
      Denote by $G$ the obtained path, and note that $G$ alternates between $V_1 \cup \cdots \cup V_{m}$ and $V_{m+1}$.
  \item The output of the reduction is the path $G$ and the partition $V_1 , \ldots , V_{m+1}$ of its vertex set $V$ into $m+1$ sets. By construction, $|V_i|$ is odd for every $i \in [m+1]$.
\end{enumerate}
It is easy to verify that the reduction can be implemented in polynomial running-time. Indeed, every density function $g_i$ is piecewise constant with at most $p(m)$ blocks, hence for every $i \in [m]$ the number of vertices that the reduction defines for $V_i$ is at most $1/\delta+p(m)+1$, and the latter is polynomial in the input size because of the definition of $\delta$ and the fact that $\eps$ is at least inverse-polynomial in $m$.
The additional set $V_{m+1}$ doubles the number of vertices, possibly with one extra vertex, preserving the construction polynomial in the input size.

\begin{figure}[h]
\begin{center}
\includegraphics[width=3.3in]{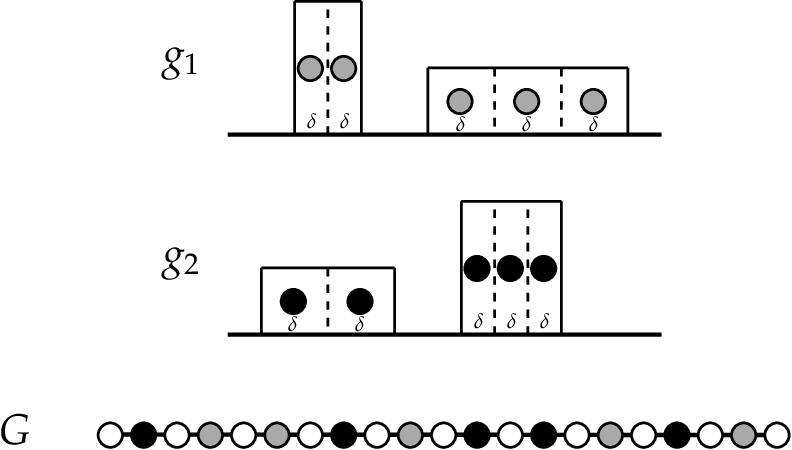}
\end{center}
\caption{An illustration of the reduction for $m=2$ (Theorem~\ref{thm:MainReduction}). Given the density functions $g_1$ and $g_2$, the reduction produces a path $G$ and a partition of its vertex set into three sets of odd sizes: $V_1$ (gray), $V_2$ (black), and $V_3$ (white). The path $G$ alternates between $V_1 \cup V_2$ and $V_3$.}
\label{fig:density}
\end{figure}

We turn to prove the correctness of the reduction, that is, that a solution to the constructed instance of $\frac{\alpha \cdot \eps}{2}\textsc{-}\FSplitP'$ can be used to efficiently compute a solution to the original instance of $\eps\textsf{-}\ConHalv(m,m+1)$.
Suppose we are given a solution to $\frac{\alpha \cdot \eps}{2}\textsc{-}\FSplitP'$ for the path $G$ and the partition $V_1 , \ldots , V_{m+1}$ of its vertex set $V$. Such a solution consists of two disjoint independent sets $S_1$ and $S_2$ of $G$ covering all but at most $m+1$ of the vertices of $G$ such that
\begin{eqnarray}\label{eq:SandV_i}
|S_1 \cap V_i| \in \Big [ (\tfrac{1}{2}-\tfrac{\alpha \cdot \eps}{2} ) \cdot |V_i|-1, (\tfrac{1}{2}+\tfrac{\alpha \cdot \eps}{2} ) \cdot |V_i| \Big ]
\end{eqnarray}
for all $i \in [m+1]$. Let $S_3 = V \setminus (S_1 \cup S_2)$. It can be assumed that $|S_3|=m+1$ (otherwise, remove some arbitrary vertices from $S_2$).
Denote the vertices of $S_3$ by $u_1, \ldots ,u_{m+1}$ ordered according to their order in $G$.
Let $P_1, \ldots, P_{m+2}$ be the $m+2$ paths obtained from $G$ by removing the vertices of $S_3$ (where some of the paths might be empty).
Since $S_1$ and $S_2$ are independent sets, every path $P_j$ alternates between $S_1$ and $S_2$. By our construction, this implies that in every path $P_j$ either the vertices of $S_1$ are from $V \setminus V_{m+1}$ and those of $S_2$ are from $V_{m+1}$, or the vertices of $S_2$ are from $V \setminus V_{m+1}$ and those of $S_1$ are from $V_{m+1}$.
We define $b_j = 1$ in the former case and $b_j=2$ in the latter.
Thus, for every $i \in [m]$, the number of vertices of $V_i$ that appear in the paths $P_j$ with $b_j=1$ is precisely $|S_1 \cap V_i|$.

Now, let $\beta_1, \ldots, \beta_{m+1} \in I$ be the locations of the vertices $u_1, \ldots, u_{m+1}$ in the interval $I$ as defined by the reduction. We interpret these locations as $m+1$ cuts of the interval $I$. Set $\beta_0 = 0$ and $\beta_{m+2}=1$, and for every $j \in [m+2]$, let $I_j$ denote the interval $[\beta_{j-1}, \beta_{j}]$. Consider the partition of $I$ into two pieces $I^+$ and $I^-$, where $I^+$ includes all the parts $I_j$ with $b_j=1$ and $I^-$ includes all the parts $I_j$ with $b_j=2$. We claim that this partition, which is obtained using $m+1$ cuts in $I$, forms a valid solution to the original instance of $\eps\textsf{-}\ConHalv(m,m+1)$. To this end, we show that for every $i \in [m]$ it holds that $|\mu_i(I^+) -\frac{1}{2}| \leq \frac{\eps}{2}$, which is equivalent to $|\mu_i(I^+) - \mu_i(I^-)| \leq \eps$.

Fix some $i \in [m]$. We turn to estimate the quantity $\mu_i(I^+)$, i.e., the total measure of $\mu_i$ on the intervals $I_j$ with $b_j=1$.
By our construction, every vertex of $V_i$ corresponds to a sub-interval whose measure by $\mu_i$ is $\delta$ (except for at most $p(m)+1$ of them). Since the intervals of $I^+$ correspond to the paths $P_j$ whose vertices in $V \setminus V_{m+1}$ are precisely the vertices of $S_1 \setminus V_{m+1}$, one would expect $\mu_i(I^+)$ to measure the number of vertices in $S_1 \cap V_i$, with a contribution of $\delta$ per every such vertex. This suggests an estimation of $|S_1 \cap V_i| \cdot \delta$~ for $\mu_i(I^+)$.
This estimation, however, is not accurate for the following reasons:
\begin{itemize}
  \item The set $V_i$ might include vertices that correspond to imperfect sub-intervals whose measure by $\mu_i$ is smaller than $\delta$. Since there are at most $p(m)$ such vertices in $V_i$, they can cause an error of at most $p(m) \cdot \delta$ in the above estimation.
  \item To make sure that $|V_i|$ is odd, the reduction might add one extra vertex to $V_i$. This might cause an error of at most $\delta$ in the above estimation.
  \item The precise locations $\beta_j$ of the cuts of $I$ might fall inside sub-intervals that correspond to vertices of $V_i$. Since the sub-intervals that correspond to vertices of $V_i$ are disjoint, every such cut can cause an error of at most $\delta$ in the above estimation, and since there are $m+1$ cuts the error here is bounded by $(m+1) \cdot \delta$.
\end{itemize}
We conclude that $\mu_i(I^+)$ differs from the aforementioned estimation $|S_1 \cap V_i| \cdot \delta$~ by not more than $(p(m)+m+2) \cdot \delta$.
Combining~\eqref{eq:|V_i|} and~\eqref{eq:SandV_i}, it can be verified that
\[ \Big | |S_1 \cap V_i| \cdot \delta - \tfrac{1}{2} \Big | \leq \tfrac{\alpha \cdot \eps}{2} + (p(m)+1) \cdot \delta,\]
hence
\begin{eqnarray*}
\Big | \mu_i(I^+) - \tfrac{1}{2} \Big |
&\leq& \Big | \mu_i(I^+) - |S_1 \cap V_i| \cdot \delta \Big | + \Big | |S_1 \cap V_i| \cdot \delta - \tfrac{1}{2} \Big | \\
&\leq& (p(m)+m+2) \cdot \delta + \tfrac{\alpha \cdot \eps}{2} + (p(m)+1) \cdot \delta \\
&=&  \tfrac{\alpha \cdot \eps}{2} + (2p(m)+m+3) \cdot \delta = \tfrac{\eps}{2},
\end{eqnarray*}
where the last equality holds by the definition of $\delta$. This completes the proof.
\end{proof}

We establish the following result.

\begin{theorem}\label{thm:FSplitP'_hard}
There exists a constant $\eps >0$ for which the $\eps\textsf{-}\FSplitP'$ problem is $\PPA$-hard.
\end{theorem}

\begin{proof}
By Theorem~\ref{thm:CH_PPA_hard}, the $\eps\textsf{-}\ConHalv(m,m+1)$ problem is $\PPA$-hard for input density functions that are piecewise constant with at most $3$ blocks, where $\eps >0$ is some constant.
By Theorem~\ref{thm:MainReduction}, for any $\alpha \in [0,1)$, this problem is polynomial-time reducible to the $\frac{\alpha \cdot \eps}{2}\textsf{-}\FSplitP'$ problem, implying the assertion of the theorem.
\end{proof}

\subsection{Hardness of $\FISC$ and $\FSplitC$}

Equipped with Theorem~\ref{thm:FSplitP'_hard}, we are ready to derive the hardness of the $\FISC$ and $\FSplitC$ problems (see Definitions~\ref{def:FISC} and~\ref{def:FSplitC}).

\begin{theorem}\label{thm:FISC_PPAhard}
The $\FISC$ problem is $\PPA$-hard.
\end{theorem}

\begin{proof}
By Theorem~\ref{thm:FSplitP'_hard}, the $\eps\textsf{-}\FSplitP'$ problem is $\PPA$-hard for some $\eps >0$. It thus follows that $\FSplitP'$, with $\eps=0$, is $\PPA$-hard as well. Hence, to prove the theorem, it suffices to show that $\FSplitP'$ is polynomial-time reducible to $\FISC$.

Consider an instance of $\FSplitP'$, that is, a path $G$ on $n$ vertices and a partition $V_1, \ldots ,V_m$ of its vertex set into $m$ sets such that $|V_i|$ is odd for all $i \in [m]$. The reduction simply returns the cycle $G'$, obtained from the path $G$ by connecting its endpoints by an edge, and the same partition $V_1, \ldots ,V_m$ of its vertex set.
For correctness, suppose that we are given a solution to this instance of $\FISC$, i.e., an independent set $S_1$ of $G'$ satisfying $|S_1 \cap V_i| \geq \frac{1}{2} \cdot |V_i|-1$ for all $i \in [m]$.
Since each $|V_i|$ is odd, it can be assumed that $|S_1 \cap V_i| = \frac{1}{2} \cdot (|V_i|-1)$ for all $i \in [m]$ (by removing some vertices from $S_1$ if needed), implying that
\[ |S_1| = \sum_{i=1}^{m}{|S_1 \cap V_i|} = \tfrac{1}{2} \cdot  \sum_{i=1}^{m}{ (|V_i|-1  )} = \frac{n-m}{2}.\]
For every vertex of $S_1$ consider the vertex that follows it in the cycle $G'$ (say, oriented clockwise), and let $S_2$ be the set of vertices that follow those of $S_1$. Since $S_1$ is an independent set in $G'$, we get that $S_2$ is another independent set in $G'$ which is disjoint from $S_1$ and has the same size. We obtain that
\[|S_1 \cup S_2| = |S_1|+|S_2| = 2 \cdot \frac{n-m}{2} = n-m,\]
hence $S_1$ and $S_2$ are two disjoint independent sets of $G'$ covering all but $m$ of its vertices. In particular, $S_1$ and $S_2$ are independent sets in the path $G$, and as such, they form a valid solution to the $\FSplitP'$ instance. This solution can clearly be constructed in polynomial running-time given $S_1$, completing the proof.
\end{proof}

\begin{theorem}\label{thm:FSplitC_eps_hard}
There exists a constant $\eps>0$ for which the $\eps\textsf{-}\FSplitC$ problem is $\PPA$-hard.
\end{theorem}

\begin{proof}
By Theorem~\ref{thm:FSplitP'_hard}, the $\eps\textsf{-}\FSplitP'$ problem is $\PPA$-hard for some constant $\eps >0$.
It thus suffices to show that for every $\eps \geq 0$, the $\eps\textsf{-}\FSplitP'$ problem is polynomial-time reducible to the $\eps\textsf{-}\FSplitC$ problem.

Consider again the reduction that given a path $G$ and a partition $V_1,\ldots,V_m$ of its vertex set into sets of odd sizes returns the cycle $G'$, obtained from the path $G$ by connecting its endpoints by an edge, and the same partition $V_1,\ldots,V_m$. Since the sets of the partition have odd sizes, it follows that the number of vertices and the number of sets in the partition have the same parity, hence the reduction provides an appropriate instance of the $\eps\textsf{-}\FSplitC$ problem.

For correctness, consider a solution to the constructed instance, i.e., two disjoint independent sets $S_1$ and $S_2$ of $G'$ covering all vertices but one from each part $V_i$ such that for each $j \in \{1,2\}$, it holds that $|S_j \cap V_i| \geq (\frac{1}{2}-\eps) \cdot |V_i| -1$ for all $i \in [m]$. We claim that $S_1$ and $S_2$ form a valid solution to the original $\eps\textsf{-}\FSplitP'$ instance. Indeed, an independent set in $G'$ is also an independent set in $G$. In addition, the set $S_1$ satisfies $|S_1 \cap V_i| \in \big [  (\tfrac{1}{2}-\eps  ) \cdot |V_i|-1, (\tfrac{1}{2}+\eps ) \cdot |V_i| \big ]$ for all $i \in [m]$, where the upper bound holds because
\[ |S_1 \cap V_i| = |V_i| - |S_2 \cap V_i|-1 \leq |V_i| - \Big ( (\tfrac{1}{2}-\eps) \cdot |V_i| -1 \Big ) -1 = (\tfrac{1}{2}+\eps) \cdot |V_i|.\]
This completes the proof.
\end{proof}

\begin{remark}\label{remark:0.1}
The $\PPA$-hardness of the $\eps\textsf{-}\ConHalv(m,m+1)$ problem, given by Theorem~\ref{thm:CH_PPA_hard}, was proved in~\cite{DeligkasFHM22} for any constant $\eps < 0.2$.
It thus follows from the proofs of Theorems~\ref{thm:FSplitP'_hard} and~\ref{thm:FSplitC_eps_hard} that $\eps\textsf{-}\FSplitC$ is $\PPA$-hard for any constant $\eps < 0.1$.
\end{remark}

Theorem~\ref{thm:FSplitC_eps_hard} implies that the $\FSplitC$ problem, with $\eps = 0$, is $\PPA$-hard.
The following simple lemma shows that this hardness result can also be derived from the hardness of the $\FISC$ problem, given in Theorem~\ref{thm:FISC_PPAhard}.

\begin{lemma}\label{lem:FISC2FSplitC}
The $\FISC$ problem is polynomial-time reducible to the $\FSplitC$ problem.
\end{lemma}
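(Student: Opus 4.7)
The plan is to exhibit a polynomial-time reduction whose main difficulty is the parity condition $n \equiv m \pmod{2}$ appearing in the definition of $\FSplitC$, which is not required by $\FISC$. The reduction therefore splits into two cases depending on the parities of $n$ and $m$ in the given $\FISC$ instance. In the easy case $n \equiv m \pmod{2}$, the reduction simply outputs the same cycle and partition; any $\FSplitC$ solution $(S_1, S_2)$ then satisfies $|S_j \cap V_i| \geq \tfrac{1}{2}|V_i| - 1$ for every $i \in [m]$ and $j \in \{1,2\}$, so $S_1$ alone is a valid $\FISC$ solution.

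The main obstacle is the case $n \not\equiv m \pmod{2}$, where directly reusing the input is not permitted. My plan here is to subdivide an arbitrary edge $\{u, w\}$ of $G$ by inserting two new vertices $a_1, a_2$ along it (forming the sub-path $u - a_1 - a_2 - w$), to call the resulting cycle $G'$, and to add $V_{m+1} = \{a_1, a_2\}$ as an extra part. The modified instance has $n' = n + 2$ vertices and $m' = m + 1$ parts, and since $n - m$ is odd the difference $n' - m' = (n - m) + 1$ is even, so the parity hypothesis of $\FSplitC$ is satisfied.

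The key verification will be that, given a solution $(S_1, S_2)$ of $\FSplitC$ on the modified instance, at least one of $S_1 \cap V(G)$ and $S_2 \cap V(G)$ is an independent set of the original cycle $G$. Since $E(G) \setminus E(G') = \{uw\}$, each set $S_j \cap V(G)$ is automatically independent in $G - uw$, and it fails to be independent in $G$ only when both $u$ and $w$ belong to $S_j$. The disjointness $S_1 \cap S_2 = \emptyset$ forces at most one of $S_1, S_2$ to contain $\{u, w\}$, so at least one of the two restricted sets is independent in $G$. Combined with the fairness bound $|S_j \cap V_i| \geq \tfrac{1}{2}|V_i| - 1$ inherited from the $\FSplitC$ guarantee for all $i \in [m]$, the reduction outputs whichever of the two is independent in $G$ as the desired $\FISC$ solution, and the entire construction is polynomial-time.
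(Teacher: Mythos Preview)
Your proposal is correct and follows essentially the same approach as the paper: fix the parity constraint by subdividing an edge of the cycle, then use the disjointness of $S_1$ and $S_2$ to guarantee that at least one of them avoids both endpoints of the subdivided edge and hence restricts to an independent set of the original cycle satisfying the fairness bounds. The only (cosmetic) difference is in how the parity is repaired: the paper inserts a \emph{single} new vertex and assigns it to an existing part $V_i$ of even size (such a part must exist when $n\not\equiv m\pmod 2$), whereas you insert \emph{two} new vertices and create a fresh part $V_{m+1}$; both variants work for the same reason and neither offers a real advantage over the other.
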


\begin{proof}
Consider an instance of $\FISC$, that is, a cycle $G$ on $n$ vertices and a partition $V_1, \ldots ,V_m$ of its vertex set into $m$ sets.
If $n$ and $m$ have the same parity then the reduction returns the input as is. Otherwise, there exists some $j \in [m]$ for which the size of $V_j$ is even. In this case, the reduction adds to the cycle $G$ a new vertex located between two arbitrary consecutive vertices and puts it in $V_j$.
Now, the number of vertices and the number of sets in the partition have the same parity, so the reduction can output the obtained cycle and partition.

We turn to prove the correctness of the reduction.
If the given instance of $\FISC$ satisfies that $n$ and $m$ have the same parity, then its solution as an instance of $\FSplitC$ includes two disjoint independent sets that fairly represent the partition, and each of them forms a solution as an instance of $\FISC$ as well.
So suppose that $n$ and $m$ have a different parity, and let $j \in [m]$ denote the index for which the reduction adds a vertex to $V_j$.
Let $u$ denote the added vertex, and define $V'_i = V_i$ for $i \in [m] \setminus \{j\}$ and $V'_j = V_j \cup \{u\}$.
Now, a solution to the constructed instance of $\FSplitC$ includes two disjoint independent sets that fairly represent the partition. Clearly, at least one of the sets does not include both neighbors of $u$. Letting $S'$ denote such a set, it follows that the set $S = S' \setminus \{u\}$ is independent in the original given cycle. For every $i \in [m] \setminus \{j\}$, it holds that $|S \cap V_i| = |S' \cap V'_i| \geq \frac{1}{2}\cdot |V'_i|-1 = \frac{1}{2}\cdot |V_i|-1$. It further holds that
\[|S \cap V_j| \geq |S' \cap V'_j|-1 \geq \tfrac{1}{2}\cdot |V'_j|-2.\]
Since $|V_j|$ is even, it follows that $|V'_j|$ is odd, hence $|S \cap V_j| \geq \frac{1}{2}\cdot |V'_j|-\frac{3}{2} = \frac{1}{2}\cdot |V_j|-1$.
This implies that $S$ is a solution to the original instance of $\FISC$, and we are done.
\end{proof}

\section{The $\SchrijverP$ Problem}\label{sec:Schrijver}

In this section we introduce and study the $\SchrijverP$ problem, a natural analogue of the $\KneserP$ problem defined by Deng et al.~\cite{DengFK17}.

We start with some definitions.
A set $A \subseteq [n]$ is said to be {\em stable} if it does not contain two consecutive elements modulo $n$ (that is, if $i \in A$ then $i+1 \notin A$, and if $n \in A$ then $1 \notin A$). In other words, a stable subset of $[n]$ is an independent set in the cycle on $n$ vertices with the numbering from $1$ to $n$ along the cycle.
For integers $n \geq 2k$, let $\binom{[n]}{k}_{\mathrm{stab}}$ denote the collection of all stable $k$-subsets of $[n]$.
Recall that the Schrijver graph $S(n,k)$ is the graph on the vertex set $\binom{[n]}{k}_{\mathrm{stab}}$, where two sets are adjacent if they are disjoint.
We define the search problem $\SchrijverP$  as follows.

\begin{definition}[Schrijver Graph Problem]\label{def:Schrijver}
In the $\SchrijverP$ problem, the input consists of a Boolean circuit that represents a coloring $$c: \binom{[n]}{k}_{\mathrm{stab}} \rightarrow [n-2k+1]$$ of the Schrijver graph $S(n,k)$ using $n-2k+1$ colors, where $n$ and $k$ are integers satisfying $n \geq 2k$. The goal is to find a monochromatic edge, i.e., two disjoint sets $S_1,S_2 \in \dbinom{[n]}{k}_{\mathrm{stab}}$ such that $c(S_1)=c(S_2)$.
\end{definition}
\noindent
As mentioned earlier, it was proved by Schrijver~\cite{SchrijverKneser78} that the chromatic number of $S(n,k)$ is precisely $n-2k+2$. Therefore, every input to the $\SchrijverP$ problem has a solution.

\subsection{From $\FISC$ to $\SchrijverP$}
The following theorem is used to obtain the hardness result for the $\SchrijverP$ problem. The proof applies an argument of~\cite{AABCKLZ17} (see also~\cite{BCFPS20}).

\begin{theorem}\label{thm:FSplitC2Schrijver}
The $\FISC$ problem is polynomial-time reducible to the $\SchrijverP$ problem.
\end{theorem}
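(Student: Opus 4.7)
The plan is to translate the topological existence argument of Aharoni et al.~\cite{AABCKLZ17}, which uses Schrijver's chromatic number formula $\chi(S(n,k)) = n - 2k + 2$~\cite{SchrijverKneser78} to produce a fair independent set in any cycle equipped with a prescribed partition of its vertex set, into a polynomial-time reduction. The setup is favorable because a stable $k$-subset of $[n]$ in the cyclic sense is exactly an independent set of size $k$ in the standard cycle on $[n]$, so the vertex set of $S(n,k)$ is already the collection of size-$k$ independent-set candidates of the cycle that we wish to fairly split.

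First I would fix the Schrijver parameters. Given a $\FISC$ instance consisting of a cycle $G$ on $n$ vertices (labelled $1,\ldots,n$ in cyclic order) and a partition $V_1,\ldots,V_m$ of its vertex set, I would pick $k = \lfloor (n-m+1)/2 \rfloor$, so that the number of colors $n - 2k + 1$ is either $m$ or $m+1$, and output a Schrijver instance with ground set $[n]$ and subset size $k$. The chromatic number of $S(n,k)$ is then exactly one more than the number of colors, so any coloring output in this format is guaranteed to admit a monochromatic edge---the very fact driving Aharoni et al.'s argument.

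Next I would specify, by a polynomial-size Boolean circuit, a coloring $c : \binom{[n]}{k}_{\mathrm{stab}} \to [n - 2k + 1]$ in the spirit of Aharoni et al. Given a stable $k$-subset $A$, the circuit first computes $|A \cap V_i|$ for every $i \in [m]$; if $|A \cap V_i| \geq \tfrac{1}{2}|V_i| - 1$ for all $i$, then $A$ is already a fair independent set and the circuit outputs an arbitrary fixed color (so that returning a fair independent set in the extraction step reduces to checking the endpoints of the monochromatic edge); otherwise, the circuit outputs a witness index in $[m]$ for the deficiency of $A$, chosen as in the existence proof so as to encode enough of the cyclic structure of $G$ for the extraction to succeed.

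Finally I would extract, from a $\SchrijverP$ solution $\{A_1, A_2\}$---two disjoint stable $k$-subsets with $c(A_1) = c(A_2)$---a fair independent set of $G$ in polynomial time. One first checks directly whether $A_1$ or $A_2$ is itself fair, and returns it if so. Otherwise $A_1$ and $A_2$ share a witness index $i^\ast \in [m]$ and are both deficient in $V_{i^\ast}$; this is precisely the configuration ruled out by Aharoni et al.'s argument, and the contradiction they derive---exploiting the disjointness of $A_1$ and $A_2$ together with the interlacing pattern forced by the cyclic labeling---turns, in its constructive reading, into an explicit description of a fair independent set built from $A_1 \cup A_2$. I expect the main difficulty to lie in making this last step polynomial-time, i.e., in specifying the witness in the coloring precisely enough---following the combinatorial heart of \cite{AABCKLZ17}---so that when two disjoint stable $k$-subsets agree on their witness, a fair independent set can be read off by a simple cyclic scan of $A_1 \cup A_2$; once this canonical witness is pinned down, verifying that the circuit is of polynomial size and that the reduction runs in polynomial time is routine.
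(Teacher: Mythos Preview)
Your proposal has the coloring inverted, and that is a genuine gap rather than a detail to be filled in. You color a non-fair stable set $A$ by a \emph{deficiency} witness, namely an index $i$ with $|A\cap V_i| < \tfrac12|V_i|-1$. But disjointness of $A_1$ and $A_2$ gives no contradiction when both are deficient in the same part: two disjoint sets can each miss most of $V_{i^\ast}$ simultaneously, so the ``contradiction they derive'' that you invoke from~\cite{AABCKLZ17} simply does not materialize, and there is no mechanism by which a fair set can be ``read off by a simple cyclic scan of $A_1\cup A_2$''. The argument of Aharoni et al.\ (and the paper's proof) colors by \emph{excess}: after first trimming each even $V_i$ by one vertex so that $|V'_i|=2r_i+1$ and setting $k=\sum_i r_i$ (hence $n-2k+1=m+1$), one assigns to $S$ the least $i$ with $|S\cap V'_i|>r_i$, and the special color $m+1$ if no such $i$ exists. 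Then a monochromatic edge in color $i\le m$ is impossible by pigeonhole ($|S_1\cap V'_i|+|S_2\cap V'_i|\ge 2r_i+2>|V'_i|$), so the edge must lie in color $m+1$, and each endpoint is \emph{immediately} a fair independent set---no further extraction is needed.

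Two smaller points follow from this. First, your choice $k=\lfloor (n-m+1)/2\rfloor$ without the odd-size preprocessing leaves the color budget ambiguous (``either $m$ or $m+1$''); the trimming step is exactly what makes the parameters line up so that $n-2k+1=m+1$. Second, once the excess coloring is in place, your entire third step (checking whether $A_1$ or $A_2$ is fair, and otherwise performing a cyclic scan) collapses: both endpoints of the monochromatic edge are already fair, and the reduction becomes the one-line pigeonhole observation above.
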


\begin{proof}
Consider an instance of the $\FISC$ problem, namely, a cycle $G$ and a partition $V_1, \ldots ,V_m$ of its vertex set into $m$ sets.
For every $i \in [m]$, let $V'_i$ be the set obtained from $V_i$ by removing one arbitrary vertex if $|V_i|$ is even, and let $V'_i = V_i$ otherwise.
Since the size of every set $V'_i$ is odd, we can write $|V'_i|=2r_i+1$ for an integer $r_i \geq 0$.
Let $G'$ be the cycle obtained from $G$ by removing the vertices that do not belong to the sets $V'_i$ and connecting the remaining vertices according to their order in $G$.
Letting $n$ denote the number of vertices in $G'$, it can be assumed that its vertex set is $[n]$ with the numbering from $1$ to $n$ along the cycle.
Let $k = \sum_{i=1}^{m}{r_i}$, and notice that $n=2k+m$. Define a coloring $c$ of the Schrijver graph $S(n,k)$ as follows. The color $c(S)$ of a vertex $S \in \binom{[n]}{k}_{\mathrm{stab}}$ is defined as the smallest integer $i \in [m]$ for which $|S \cap V'_i| > r_i$ in case that such an $i$ exists, and $m+1$ otherwise. This gives us a coloring of $S(n,k)$ with $n-2k+1$ colors, and thus an instance of the $\SchrijverP$ problem.
It can be seen that a Boolean circuit that computes the coloring $c$ can be constructed in polynomial running-time.

To prove the correctness of the reduction, consider a solution to the constructed $\SchrijverP$ instance, i.e., two disjoint sets $S_1,S_2 \in \binom{[n]}{k}_{\mathrm{stab}}$ with $c(S_1)=c(S_2)$. It is impossible that for some $i \in [m]$ it holds that $|S_1 \cap V'_i| > r_i$ and $|S_2 \cap V'_i| > r_i$, because $S_1$ and $S_2$ are disjoint and $|V'_i|=2r_i+1$. It follows that $c(S_1)=c(S_2)=m+1$, meaning that $|S_1 \cap V'_i| \leq r_i$ and $|S_2 \cap V'_i| \leq r_i$ for all $i \in [m]$. Since $|S_1|=|S_2|=k$, it follows that $|S_1 \cap V'_i| = r_i$ and $|S_2 \cap V'_i| = r_i$ for all $i \in [m]$, hence $S_1$ and $S_2$ are two disjoint independent sets of $G'$ covering all vertices but one from each $V'_i$ and for each $j \in \{1,2\}$, we have $|S_j \cap V'_i| = \frac{1}{2} \cdot (|V'_i|-1) \geq \frac{1}{2} \cdot |V_i|-1$ for all $i \in [m]$. Since $S_1$ and $S_2$ are also independent sets of the original cycle $G$, each of them forms a valid solution to the $\FISC$ instance, completing the proof.
\end{proof}

\subsection{Membership in $\PPA$}

We now show that the $\SchrijverP$ and $\FSplitC$ problems lie in $\PPA$ by reductions to the search problem associated with the Octahedral Tucker lemma.
The reductions follow the proofs of the corresponding mathematical statements by Meunier~\cite{Meunier11} and by Alishahi and Meunier~\cite{AlishahiM17}, and we describe them here essentially for completeness.

We start with some notation (following~\cite[Section~2]{DGMM19}).
The partial order $\preceq$ on the set $\{+,-,0\}$ is defined by $0 \preceq +$ and by $0 \preceq -$, where $+$ and $-$ are incomparable.
The definition is extended to vectors, so that for two vectors $x,y$ in $\{+,-,0\}^n$, we have $x \preceq y$ if for all $i \in [n]$ it holds that $x_i \preceq y_i$ (equivalently, $x_i = y_i$ whenever $x_i \neq 0$).
The Octahedral Tucker lemma, given implicitly in~\cite{Matousek04} and explicitly in~\cite{Ziegler02}, says that for a function $\lambda : \{+,-,0\}^n \setminus \{0\} \rightarrow \{\pm1, \ldots, \pm (n-1)\}$ satisfying $\lambda(-x) = -\lambda(x)$ for all $x$, there exist vectors $x,y$ such that $x \preceq y$ and $\lambda(x) = -\lambda(y)$.
Note that this corresponds to the general Tucker's lemma applied to (the boundary of) the barycentric subdivision of the $n$-cube whose vertex set can be identified with $\{+,-,0\}^n$ (see~\cite{Ziegler02}).
The lemma guarantees the existence of a solution to every input of the following search problem, denoted $\OTucker$.

\begin{definition}[Octahedral Tucker Problem]\label{def:OTucker}
In the $\OTucker$ problem, the input consists of a Boolean circuit that represents a function
$\lambda : \{+,-,0\}^n \setminus \{0\} \rightarrow \{\pm 1, \pm 2, \ldots, \pm (n-1) \}$
satisfying $\lambda(-x) = -\lambda(x)$ for all $x$.
The goal is to find vectors $x,y$ such that $x \preceq y$ and $\lambda(x) = -\lambda(y)$.
\end{definition}
\noindent
The $\OTucker$ problem is known to be $\PPA$-complete~\cite{DengFK17}, where its membership in $\PPA$ essentially follows already from~\cite{Papa94} (see also~\cite[Appendix~A]{DengFK17} and~\cite[Section~3]{AisenbergBB20}).

\begin{proposition}\label{prop:OTucker}
The $\OTucker$ problem lies in $\PPA$.
\end{proposition}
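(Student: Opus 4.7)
The plan is to reduce $\OTucker$ to the $\LEAF$ problem, which, by definition, places it in $\PPA$. Given the Boolean circuit computing the labeling $\lambda$, I will describe a succinctly represented undirected graph $H$ of maximum degree two, together with a distinguished degree-one vertex, so that every other degree-one vertex of $H$ can be efficiently decoded into a comparable pair $x \preceq y$ with $\lambda(x) = -\lambda(y)$.

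The vertices of $H$ will be \emph{alternating chains} in the poset $(\{+,-,0\}^n \setminus \{0\},\preceq)$, meaning sequences $x^{(1)} \prec x^{(2)} \prec \cdots \prec x^{(t)}$ in which each $x^{(j+1)}$ differs from $x^{(j)}$ in exactly one coordinate (switching from $0$ to $+$ or $-$), and in which the labels $\lambda(x^{(1)}),\ldots,\lambda(x^{(t)})$ are strictly increasing in absolute value, alternate in sign along the chain, and are normalized so that $\lambda(x^{(1)}) = +1$. Since absolute values of labels lie in $\{1,\ldots,n-1\}$, each such chain has length at most $n-1$ and admits a polynomial-size description. Edges come from two canonical pivots: \emph{truncation}, which deletes the top element $x^{(t)}$ when $t \geq 2$, and \emph{extension}, which appends the lexicographically smallest $x^{(t+1)} \succ x^{(t)}$ whose label has sign opposite to $\lambda(x^{(t)})$ and strictly larger absolute value, when such an element exists. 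Each configuration then has at most one truncation-neighbor and one extension-neighbor, yielding maximum degree two.

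For the distinguished leaf, I will either pick a canonical length-one chain $(x^{(1)})$ with $\lambda(x^{(1)}) = +1$ by a fixed tie-breaking rule, or attach a virtual source-vertex to all extendable length-one chains via a small gadget, so that the $\LEAF$ instance has a single prescribed degree-one vertex. Every other degree-one vertex of $H$ is then necessarily a maximal alternating chain $(x^{(1)},\ldots,x^{(t)})$ admitting no extension. The core combinatorial claim is that, for such a chain, the absence of an admissible extension combined with the antipodal symmetry $\lambda(-x) = -\lambda(x)$ forces the existence of some $x^{(j)} \preceq y$ in $\{+,-,0\}^n \setminus \{0\}$ with $\lambda(y) = -\lambda(x^{(j)})$ (possibly after replacing $x^{(j)}$ by $-x^{(j)}$), providing the required $\OTucker$ solution.

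The main obstacle is arranging the pivot rules carefully enough that (i) the graph truly has degree at most two, (ii) only the one prescribed starting vertex arises as a ``spurious'' leaf, and (iii) the extraction step at every other leaf yields a genuinely comparable pair with opposite labels. This is exactly the content of the combinatorial proof of the Octahedral Tucker lemma of Matou\v{s}ek and Ziegler, and the task here is to verify that their chain-pivoting argument can be carried out by polynomial-time computation so as to fit Papadimitriou's $\LEAF$ template from \cite{Papa94}. Once this is set up, $\PPA$-membership of $\OTucker$ follows immediately.
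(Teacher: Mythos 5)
The paper does not itself prove this proposition---$\PPA$-membership of $\OTucker$ is attributed to Papadimitriou~\cite{Papa94}, and the reader is referred to~\cite[Appendix~A]{DengFK17} for the octahedral version. Your outline tries to reconstruct that $\LEAF$-reduction, and the high-level framework (path-following on chains of $(\{+,-,0\}^n\setminus\{0\},\preceq)$, i.e.\ on simplices of the first barycentric subdivision of the cross-polytope) is indeed the right one. But the specific pivoting rule you write down does not yield a valid undirected graph of degree at most two, and the leaf-extraction claim at the end is false under your definitions.

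The central problem is that your truncation and extension pivots are not inverse to one another. Truncating $\sigma=(x^{(1)},\ldots,x^{(t)})$ always produces $\tau=(x^{(1)},\ldots,x^{(t-1)})$, but the lexicographically smallest admissible extension of $\tau$ need not be $x^{(t)}$, so $\sigma$ names $\tau$ as a neighbor while $\tau$ does not name $\sigma$; the adjacency relation is asymmetric and you do not have a graph at all. This asymmetry is forced by requiring only \emph{strictly increasing} label magnitudes, which leaves many admissible next vertices and is exactly why you needed a lexicographic tie-break in the first place. The canonical argument avoids this by restricting to chains whose label multiset is, up to a global sign, exactly $\{1,-2,3,\ldots,(-1)^{t-1}t\}$: consecutive magnitudes with alternating signs, so that at each simplex the missing (or duplicated) label uniquely determines the vertex to add or remove, and the pivot is forced by the labeling rather than by a tie-break. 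Finally, the closing claim---that a chain with no admissible extension must produce $x^{(j)}\preceq y$ with $\lambda(y)=-\lambda(x^{(j)})$---is asserted rather than argued, and it fails for your notion of alternating chain. Take $\lambda(x)=+1$ whenever the first nonzero coordinate of $x$ is $+$ and $\lambda(x)=-1$ otherwise; this is a legitimate antipodal labeling for every $n\geq 2$. Then no chain of length two or more satisfies your magnitude condition, every length-one chain is non-extendable and has degree zero (so is not even a leaf), and there is no degree-one vertex to serve as the prescribed start. Your proposed ``virtual source'' gadget cannot save this without itself having unbounded degree. As written the construction does not reduce $\OTucker$ to $\LEAF$; it needs to be replaced by the consecutive-label pivoting of the references you cite, after which the polynomial-time implementability you correctly flag as the remaining task can be checked.
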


For a vector $x \in \{+,-,0\}^n$, we let $x^+ = \{ i \in [n] \mid x_i=+\}$ and  $x^- = \{ i \in [n] \mid x_i=-\}$. We further let $\alt(x)$ denote the maximum length of an alternating subsequence of $x$, that is, the largest integer $\ell$ for which there exist indices $1 \leq i_1 < i_2 < \cdots < i_\ell \leq n$, such that $x_{i_j} \in \{+,-\}$ for all $j \in [\ell]$ and $x_{i_j} \neq x_{i_{j+1}}$ for all $j \in [\ell-1]$. We clearly have $\alt(x) = \alt(-x)$ for every $x \in \{+,-,0\}^n$.

For a given vector $x \in \{+,-,0\}^n$, we let $A(x)$ denote the vector in $\{+,-,0\}^n$ defined as follows.
Let $I = \{ i\in [n] \mid x_i=0\}$. We first define $A(x)_i = 0$ for every $i \in I$.
Next, consider the restriction $x_{\overline{I}}$ of $x$ to the entries whose indices are in $\overline{I} = [n] \setminus I$, and notice that $x_{\overline{I}}$ can be viewed as a sequence of maximal blocks of $+$'s and of $-$'s. The restriction $A(x)_{\overline{I}}$ of $A(x)$ to the entries of $\overline{I}$ is defined as the vector obtained from $x_{\overline{I}}$ by replacing all the symbols to zeros but the first symbol of each block.
For example, for the vector $x = (+,+,0,+,-,-,0,-,+,0)$ we have $x_{\overline{I}} = (+,+,+,-,-,-,+)$, hence $A(x) = (+,0,0,0,-,0,0,0,+,0)$.

Observe that for every vector $x \in \{+,-,0\}^n$, the vector $A(x)$ satisfies $A(x) \preceq x$ as well as $\alt(x) = \alt(A(x)) = |A(x)^+|+|A(x)^-|$. Further, since $A(x)^+$ and $A(x)^-$ alternate, each of them includes no consecutive numbers.
For an integer $r$ satisfying $r \leq \alt(x)$, we let $A_r(x)$ denote the vector obtained from $A(x)$ by changing its last $\alt(x)-r$ nonzero values to zeros. It clearly holds that $A_r(x) \preceq x$ and that $\alt(A_r(x)) = |A_r(x)^+|+|A_r(x)^-|=r$.
Observe that the quantity $\alt(x)$ and the vectors $A_r(x)$ for $r \leq \alt(x)$ can be computed in polynomial running-time given $x$.

We reduce the $\SchrijverP$ problem to $\OTucker$, applying an argument of~\cite{Meunier11}.

\begin{theorem}\label{thm:Schrijver2OTucker}
The $\SchrijverP$ problem is polynomial-time reducible to the $\OTucker$ problem.
\end{theorem}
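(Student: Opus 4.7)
The plan is to construct, given a Boolean circuit computing a coloring $c : \binom{[n]}{k}_{\mathrm{stab}} \to [n-2k+1]$, a Boolean circuit for an antipodal labelling $\lambda$ on $\{+,-,0\}^N \setminus \{0\}$ (with $N$ polynomial in the input size) such that every pair $(x,y)$ with $x \preceq y$ and $\lambda(x) = -\lambda(y)$ can be converted in polynomial time into a monochromatic edge of $S(n,k)$. The definition of $\lambda$ will mirror Meunier's topological proof that $\chi(S(n,k)) = n-2k+2$, with the $\alt$ and $A_r$ operators defined before the statement as the main ingredients.

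For a nonzero $x \in \{+,-,0\}^n$, let $r = \alt(x)$ and let $s_0(x) \in \{+,-\}$ denote the sign of the first nonzero coordinate of $x$, which satisfies $s_0(-x) = -s_0(x)$. When $r < 2k$, we will set $\lambda(x) = s_0(x) \cdot r$, so that the magnitude records the alternation number. When $r \geq 2k$, the vector $A_{2k}(x)$ is a length-$2k$ alternating pattern, and both $A_{2k}(x)^+$ and $A_{2k}(x)^-$ have size $k$. A key point is that each of these two sets is a stable subset of $[n]$ modulo $n$: stability along $[n]$ is immediate from the alternation, and the cyclic condition rests on the observation that if $1$ and $n$ both appear in a length-$2k$ alternating subsequence then they occupy its first and last positions and thus carry opposite signs, landing in different parts. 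Using the stable $k$-subset $S(x) := A_{2k}(x)^{s_0(x)}$, we will set $\lambda(x) = s_0(x) \cdot ((2k-1) + c(S(x)))$, so that the magnitude records the color of $S(x)$; antipodality follows from $s_0(-x)=-s_0(x)$ together with $A_{2k}(-x)^{s_0(-x)}=A_{2k}(x)^{s_0(x)}$. The largest magnitude used by this encoding is $n$, slightly exceeding the $\OTucker$ codomain $\{\pm 1,\ldots,\pm(N-1)\}$ at $N = n$, so we plan to work in dimension $N = n+1$ (or another small polynomial) and to extend $\lambda$ to the auxiliary coordinate antipodally in a way that either prevents spurious solutions or lets us convert any such solution directly into a monochromatic edge through inspection of the coloring.

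Given an $\OTucker$ solution $(x,y)$ with $x \preceq y$ and $\lambda(x) = -\lambda(y)$ supported on the original $n$ coordinates, a short case analysis will produce the desired edge. Since $x \preceq y$ forces $\alt(x) \leq \alt(y)$ and the $|\lambda|$-ranges of the low ($r<2k$) and high ($r \geq 2k$) regimes are disjoint, both $x$ and $y$ must lie in the same regime. The low regime would force $\alt(x)=\alt(y)$ and $s_0(x)=-s_0(y)$; but whenever $\alt(x)=\alt(y)$ and $x \preceq y$, any first nonzero coordinate of $y$ that strictly precedes that of $x$ must carry the same sign as $x$'s first nonzero (otherwise $y$ admits a length-$(\alt(x)+1)$ alternating subsequence), so $s_0(x)=s_0(y)$ and this case is vacuous. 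In the high regime, we obtain $c(S(x))=c(S(y))$ together with $s_0(y)=-s_0(x)$; the inclusions $S(x) \subseteq x^{s_0(x)} \subseteq y^{s_0(x)}$ and $S(y) \subseteq y^{-s_0(x)}$ combined with the disjointness of $y^+$ and $y^-$ give $S(x) \cap S(y) = \emptyset$, so $\{S(x),S(y)\}$ is a monochromatic edge of $S(n,k)$, which is the output of the reduction.

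The hard part will be the bookkeeping required by the modest mismatch between the encoding range and the $\OTucker$ codomain: the extension of $\lambda$ to the auxiliary coordinate must preserve antipodality while guaranteeing that any $\OTucker$ solution using that coordinate is efficiently convertible to a monochromatic edge of $S(n,k)$. Once this point is settled, the verification of stability modulo $n$, antipodality, and the case analysis above all follow directly from the definitions of $\alt$ and $A_r$.
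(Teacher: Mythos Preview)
Your plan follows the same template as the paper (Meunier's argument): split on $\alt(x)\le 2k-1$ versus $\alt(x)\ge 2k$, and in the high regime encode a color read off the pair of stable $k$-sets $A_{2k}(x)^+,A_{2k}(x)^-$. Your verification that these sets are stable modulo $n$ and your argument that the low-regime case is vacuous are correct and match the paper.

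The gap is exactly the one you flag. Your labelling records $c(S(x))$ with $S(x)=A_{2k}(x)^{s_0(x)}$, so the magnitude can reach $(2k-1)+(n-2k+1)=n$, and you propose to absorb this by passing to dimension $N=n+1$ and extending $\lambda$ antipodally. This is not mere bookkeeping. For any antipodal extension $\lambda'$ one must assign a value to $(0^n,\pm)$, and then $(0^n,s)\preceq(y,s)$ together with $\lambda'(y,s)=\lambda(y)=-\lambda'(0^n,s)$ produces an $\OTucker$ solution from which you learn only a single color value $c(A_{2k}(y)^{\sigma})$ for one vector $y$; this does not in general yield a monochromatic edge. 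More generally there is no blanket reduction from ``antipodal $\lambda:\{+,-,0\}^n\setminus\{0\}\to\{\pm1,\ldots,\pm n\}$ with a complementary pair'' to $\OTucker$, since for $n=1$ such a $\lambda$ need not have a complementary pair at all.

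The paper avoids the issue by a different choice in the high regime that keeps the construction in dimension $n$. With $z=A_{2k}(x)$: if $c(z^+)\neq c(z^-)$, set $\lambda(x)=\pm(\min\{c(z^+),c(z^-)\}+2k-1)$, the sign recording which of $z^+,z^-$ attains the minimum; if $c(z^+)=c(z^-)$, set $\lambda(x)=s_0(x)\cdot(n-1)$. Antipodality holds since $x\mapsto -x$ swaps $z^+$ and $z^-$. When the two colors differ the smaller one is at most $n-2k$, so $|\lambda(x)|\le n-1$; when they coincide, $\{z^+,z^-\}$ is already the desired monochromatic edge and can be output directly regardless of what $\OTucker$ returns. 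For a genuine $\OTucker$ solution with $x\preceq y$ in the ``unequal colors'' sub-case, $\lambda(x)=-\lambda(y)$ forces (say) $c(A_{2k}(x)^+)=c(A_{2k}(y)^-)$, and $A_{2k}(x)^+\subseteq x^+\subseteq y^+$ while $A_{2k}(y)^-\subseteq y^-$ gives disjointness. With this tweak in place, your low/high case split and disjointness argument go through unchanged.
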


\begin{proof}
Consider an instance of the $\SchrijverP$ problem, that is, a Boolean circuit that represents a coloring $c: \binom{[n]}{k}_{\mathrm{stab}} \rightarrow [n-2k+1]$ of the Schrijver graph $S(n,k)$ using $n-2k+1$ colors.
Based on this coloring, we construct an instance of the $\OTucker$ problem given by the function $\lambda : \{+,-,0\}^n \setminus \{0\} \rightarrow \{\pm 1, \pm 2, \ldots, \pm (n-1) \}$ defined as follows. For a given vector $x \in \{+,-,0\}^n \setminus \{0\}$, we consider the following two cases.
\begin{enumerate}
  \item\label{itm:case1} $\alt(x) \leq 2k-1$. \\
  In this case, we define $\lambda(x) = +\alt(x)$ if the first nonzero value of $x$ is $+$, and $\lambda(x) = -\alt(x)$ otherwise.
  \item\label{itm:case2} $\alt(x) \geq 2k$. \\
  In this case, let $z = A_{2k}(x)$ and recall that $\alt(z)=|z^+|+|z^-|=2k$. Observe that $z^+$ and $z^-$ are two vertices of the Schrijver graph $S(n,k)$.
  If $c(z^+) < c(z^-)$ then we define $\lambda(x) = + (c(z^+)+2k-1)$, and if $c(z^-) < c(z^+)$ then we define $\lambda(x) = - (c(z^-)+2k-1)$. Otherwise, we define $\lambda(x)$ to be either $+(n-1)$ or $-(n-1)$, according to whether the first nonzero value of $x$ is $+$ or $-$ respectively.
\end{enumerate}
Since the given coloring $c$ uses the elements of $[n-2k+1]$ as colors, it follows that the function $\lambda$ returns values from $\{\pm 1, \ldots \pm (n-1) \}$.
We further claim that $\lambda(-x) = -\lambda(x)$ for all $x \in \{+,-,0\}^n \setminus \{0\}$. Indeed, this follows from the definition of $\lambda$ combined with the simple fact that for every $x$ and $r \leq \alt(x)$, we have $\alt(x)=\alt(-x)$ and $A_r(x)=-A_r(-x)$.
It is easy to verify that a Boolean circuit that computes the function $\lambda$ can be constructed in polynomial running-time.

We turn to prove the correctness of the reduction. Suppose we are given a solution to the constructed $\OTucker$ instance, i.e., two vectors $x,y \in \{+,-,0\}^n \setminus \{0\}$ with $x \preceq y$ and $\lambda(x) = -\lambda(y)$.
First observe that for a vector $w$ with $\alt(w) \geq 2k$, such that the vector $z = A_{2k}(w)$ satisfies $c(z^+)=c(z^-)$, the sets $z^+$ and $z^-$ are two adjacent vertices in the Schrijver graph $S(n,k)$ and thus they form a monochromatic edge in this graph. Hence, if at least one of the vectors $x$ and $y$ satisfies these conditions, which correspond to the very last sub-case of Case~\ref{itm:case2} in the definition of $\lambda$, then we are done.
Otherwise, by the definition of $\lambda$, either both $\alt(x)$ and $\alt(y)$ are at most $2k-1$ (Case~\ref{itm:case1}) or they are both not (Case~\ref{itm:case2}). However, it is easy to verify that if $x \preceq y$ and $\alt(x) = \alt(y)$ then the first nonzero values of $x$ and $y$ are equal, hence both $\alt(x)$ and $\alt(y)$ must be at least $2k$ (Case~\ref{itm:case2}). Assume without loss of generality that $\lambda(x)$ is positive. By $\lambda(x) = -\lambda(y)$ it follows that $c(A_{2k}(x)^+) = c(A_{2k}(y)^-)$. Using again the fact that $x \preceq y$, it follows that $A_{2k}(x)^+$ and $A_{2k}(y)^-$ are adjacent vertices in the Schrijver graph $S(n,k)$, providing the required monochromatic edge.
\end{proof}

We finally reduce the $\FSplitC$ problem (see Definition~\ref{def:FSplitC}) to $\OTucker$, applying an argument of~\cite{AlishahiM17}.

\begin{theorem}\label{thm:FSplitC2OTucker}
The $\FSplitC$ problem is polynomial-time reducible to the $\OTucker$ problem.
\end{theorem}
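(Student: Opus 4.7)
The plan is to mimic the construction from the proof of Theorem~\ref{thm:Schrijver2OTucker}, extending it in the spirit of the topological argument of Alishahi and Meunier~\cite{AlishahiM17}. Fix an instance of $\FSplitC$: a cycle $G$ on $n$ vertices labeled $1,\ldots,n$ along the cycle together with a partition $V_1,\ldots,V_m$ with $n \equiv m \pmod 2$, and set $2k = n-m$. I would produce a Boolean circuit computing a function $\lambda : \{+,-,0\}^n \setminus \{0\} \to \{\pm 1, \ldots, \pm(n-1)\}$ with $\lambda(-x) = -\lambda(x)$, together with a decoder that turns any $\OTucker$ solution $x \preceq y$ with $\lambda(x) = -\lambda(y)$ into a fair split.

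To define $\lambda(x)$, I would branch on $\alt(x)$. In the low-alternation regime $\alt(x) \leq 2k-1$, I copy the label used in Theorem~\ref{thm:Schrijver2OTucker}: $\lambda(x) = \pm \alt(x)$ with the sign given by the first nonzero entry of $x$, which exhausts $\{\pm 1, \ldots, \pm(2k-1)\}$. In the high-alternation regime $\alt(x) \geq 2k$, let $z = A_{2k}(x)$; then $z^+$ and $z^-$ are disjoint $k$-subsets of $[n]$, each independent in the cycle $G$ by the same cyclic-stability observation used in the proof of Theorem~\ref{thm:Schrijver2OTucker}. I would then inspect $(z^+, z^-)$: if it already fairly splits the partition with exactly one uncovered vertex in every $V_i$, then it is itself a valid solution and I assign an antipodal placeholder label (say $\pm(n-1)$ according to the first nonzero of $x$). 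Otherwise I let $i^*$ be the smallest index at which $(z^+, z^-)$ fails fairness for $V_{i^*}$ and set $\lambda(x) = \pm(2k+i^*-1)$, with the sign encoding whether it is $z^+$ or $z^-$ that is the offending side. The remaining $m$ labels per sign match precisely the number of parts, making the budget tight but sufficient, and $\lambda$ is antipodal because $\alt(-x) = \alt(x)$, $A_{2k}(-x) = -A_{2k}(x)$, and the violation labeling is symmetric in the roles of $+$ and $-$.

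To decode a collision $x \preceq y$ with $\lambda(x) = -\lambda(y)$: labels of magnitude at most $2k-1$ are ruled out exactly as in Theorem~\ref{thm:Schrijver2OTucker}, since $\alt(x) = \alt(y)$ and $x \preceq y$ force the first nonzero entries of $x$ and $y$ to coincide. Hence both $x$ and $y$ lie in the high-alternation case. If either $A_{2k}(x)$ or $A_{2k}(y)$ already falls in the ``fair split'' sub-case, the decoder simply outputs that split. Otherwise, the opposite-signed labels pinpoint a common index $i^*$ for which $A_{2k}(x)$ fails fairness on one side and $A_{2k}(y)$ fails on the other; since $A_{2k}(x) \preceq y$ and $A_{2k}(y) \preceq y$, the two ``half-solutions'' both sit inside $y$, which lets me stitch them into a single pair of disjoint independent sets fairly representing the whole partition. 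The main obstacle is to pin down exactly what ``$(z^+, z^-)$ fails fairness at $V_{i^*}$'' should mean---that is, which combinatorial measurement on $(z^+, z^-)$ is used and which side is designated the offender---so that (i) the labeling is antipodal, (ii) the ``no violation'' sub-case really implies a valid fair split, and (iii) the extraction step produces a genuine solution rather than a contradictory constraint on $V_{i^*}$. This choice is the combinatorial heart of the $\mathbb{Z}_2$-equivariant map in~\cite{AlishahiM17}, and I would translate their measurement directly into the Tucker-label framework.
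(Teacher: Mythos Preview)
Your decoding intuition is off, and the paper's construction differs from yours in a way that makes this visible. The paper does not branch on $\alt(x)$ and never uses $A_{2k}$. Instead it defines, for every nonzero $x$,
\[
J(x)=\Bigl\{\, i\in[m] \;\Big|\; \max\bigl(|x^+\cap V_i|,|x^-\cap V_i|\bigr)>\tfrac{|V_i|}{2} \ \text{or}\ |x^+\cap V_i|=|x^-\cap V_i|=\tfrac{|V_i|}{2} \,\Bigr\}
\]
and branches on whether $J(x)=\emptyset$. If $J(x)=\emptyset$ then $\lambda(x)=\pm\alt(x)$ with sign from the first nonzero entry; here $J(x)=\emptyset$ forces $|x^+\cup x^-|\le n-m$, so $|\lambda(x)|\le n-m$. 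If $J(x)\neq\emptyset$ then $\lambda(x)=\pm(i+n-m-1)$ where $i$ is the \emph{largest} element of $J(x)$, the sign recording which of $x^+,x^-$ is the offender (with a smallest-element tie-break in the equality sub-case). The equality clause in $J$ is the ``measurement'' you were missing: without it, a fully covered even-sized part would be declared fair even though it leaves no uncovered vertex.

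For a collision $x\preceq y$ with $\lambda(x)=-\lambda(y)$, the case $J(x)=J(y)=\emptyset$ is ruled out by the first-nonzero argument you know, and the case where both are nonempty is ruled out because $x\preceq y$ transfers the offender sign at the common index $i$ from $x$ to $y$, so the signs agree rather than oppose. Hence $J(x)=\emptyset$ and $J(y)\neq\emptyset$, giving $\alt(x)=|\lambda(y)|\ge n-m$; combined with $|x^+\cup x^-|\le n-m$ this yields $\alt(x)=|x^+\cup x^-|=n-m$, so $(x^+,x^-)$ is already a pair of disjoint independent sets in the cycle covering all but $m$ vertices, and $J(x)=\emptyset$ delivers fairness and the one-uncovered-per-part condition. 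The solution is simply $(x^+,x^-)$. There is no ``stitching'': the doubly-violating branch is \emph{impossible}, not a source of solutions, and merging $A_{2k}(x)^+$ with $A_{2k}(y)^-$ would control only $V_{i^*}$ and say nothing about the remaining parts. Your $A_{2k}$-based scheme can in fact be repaired by applying this same $J$ to $z=A_{2k}(x)$ and outputting $(z^+,z^-)$ whenever $J(z)=\emptyset$; but even then the decode closes by contradiction in the doubly-violating case, and the paper's version working with $x$ directly is cleaner because $x\preceq y$ transfers the sign immediately, whereas $A_{2k}(x)\preceq A_{2k}(y)$ need not hold.
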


\begin{proof}
Consider an instance of the $\FSplitC$ problem, that is, a cycle $G$ on the vertex set $[n]$ and a partition $V_1, \ldots ,V_m$ of $[n]$ into $m$ sets, such that $n$ and $m$ have the same parity. It can be assumed that $n > m$.
We construct an instance of the $\OTucker$ problem given by the function $\lambda : \{+,-,0\}^n \setminus \{0\} \rightarrow \{\pm 1, \pm 2, \ldots, \pm (n-1) \}$ defined as follows. For a given vector $x \in \{+,-,0\}^n \setminus \{0\}$, set
\[J(x) = \Big \{ i \in [m] ~~\Big {|}~~ |x^+ \cap V_i| = |x^- \cap V_i| = \tfrac{|V_i|}{2} \mbox{~ or~} \max(|x^+ \cap V_i|,|x^- \cap V_i|) > \tfrac{|V_i|}{2} \Big \}, \]
and consider the following two cases.
\begin{enumerate}
  \item $J(x) = \emptyset$. \\
  In this case, we define $\lambda(x) = +\alt(x)$ if the first nonzero value of $x$ is $+$, and $\lambda(x) = -\alt(x)$ otherwise.
  Note that by $J(x) = \emptyset$ it follows that $|x^+ \cup x^-| \leq n-m$.
  \item $J(x) \neq \emptyset$. \\
  In this case, let $i$ be the largest element of $J(x)$. If $|x^+ \cap V_i| = |x^- \cap V_i| = \tfrac{|V_i|}{2}$ then we define $\lambda(x) = +(i+n-m-1)$ in the case where the smallest element of $(x^+ \cup x^-) \cap V_i$ is in $x^+$ and $\lambda(x) = -(i+n-m-1)$ otherwise.
  If $\max(|x^+ \cap V_i|,|x^- \cap V_i|) > \tfrac{|V_i|}{2}$ then we define $\lambda(x) = +(i+n-m-1)$ in the case where $|x^+ \cap V_i|> \tfrac{|V_i|}{2}$ and $\lambda(x) = -(i+n-m-1)$ otherwise.
\end{enumerate}
By combining the definition of the function $\lambda$ with the fact that $n >m$, it follows that $\lambda$ returns values from $\{\pm 1, \ldots \pm (n-1) \}$.
We further claim that $\lambda(-x) = -\lambda(x)$ for all $x \in \{+,-,0\}^n \setminus \{0\}$. This indeed follows from the definition of $\lambda$ combined with the fact that for every $x$, we have $\alt(x)=\alt(-x)$ and $J(x)=J(-x)$.
It is easy to verify that a Boolean circuit that computes the function $\lambda$ can be constructed in polynomial running-time.

We turn to prove the correctness of the reduction. Suppose we are given a solution to the constructed $\OTucker$ instance, i.e., two vectors $x,y \in \{+,-,0\}^n \setminus \{0\}$ with $x \preceq y$ and $\lambda(x) = -\lambda(y)$. By the definition of $\lambda$, it is impossible that $J(x) = J(y) = \emptyset$, because if $x \preceq y$ and $\alt(x) = \alt(y)$ then the first nonzero values of $x$ and $y$ are equal. It is also impossible that $J(x)$ and $J(y)$ are both nonempty, because $|\lambda(x)| = |\lambda(y)|$ would imply that the largest element of $J(x)$ is equal to that of $J(y)$, hence by $x \preceq y$, $\lambda(x)$ and $\lambda(y)$ have the same sign.
By $x \preceq y$, we are left with the case where $J(x) = \emptyset$ and $J(y) \neq \emptyset$.
It follows that for some $i \in [m]$, we have
\[ \alt(x) = |\lambda(x)| = |\lambda(y)| = i+n-m-1 \geq n-m.\]
Let $S_1 = x^+$ and $S_2 = x^-$. By $J(x)=\emptyset$, it follows that $|S_1 \cap V_i| + |S_2 \cap V_i| \leq |V_i|-1$ for all $i \in [m]$, and using $\alt(x) \geq n-m$ we get that $|S_1 \cup S_2| = n-m$ and thus $|S_1 \cap V_i| + |S_2 \cap V_i| = |V_i|-1$ for all $i \in [m]$.
This means that $S_1$ and $S_2$ cover all the vertices of $G$ but one from each $V_i$, so by $J(x) = \emptyset$, each of them includes at least $\tfrac{1}{2} \cdot |V_i|-1$ elements of $V_i$. Moreover, the sets $S_1$ and $S_2$ alternate, so since $n-m$ is even, we get that they both form independent sets in the cycle $G$. Hence, $S_1$ and $S_2$ form a valid solution to the given instance of $\FSplitC$, and this solution can be constructed in polynomial running-time given $x$ and $y$.
\end{proof}

\subsection{Putting It All Together}

We finally show that the presented reductions complete the proofs of our results (see Figure~\ref{fig:reductions}).
Indeed, the $\FISC$ problem is $\PPA$-hard by Theorem~\ref{thm:FISC_PPAhard}, and is polynomial-time reducible to the $\SchrijverP$ problem by Theorem~\ref{thm:FSplitC2Schrijver}.
By Theorem~\ref{thm:Schrijver2OTucker}, the latter is efficiently reducible to the $\OTucker$ problem, which by Proposition~\ref{prop:OTucker} lies in $\PPA$. It thus follows that the $\FISC$ and $\SchrijverP$ problems are $\PPA$-complete, as required for Theorems~\ref{thm:FISC} and~\ref{thm:SchrijverP}.
In addition, by Theorem~\ref{thm:FSplitC_eps_hard}, there exists a constant $\eps>0$ for which the $\eps\textsf{-}\FSplitC$ problem is $\PPA$-hard.
The $\eps\textsf{-}\FSplitC$ problem lies in $\PPA$, even for $\eps = 0$, as follows by combining Theorem~\ref{thm:FSplitC2OTucker} with Proposition~\ref{prop:OTucker}. This confirms Theorems~\ref{thm:FSplitC} and~\ref{thm:FSplitC_eps}.

\section*{Acknowledgements}

We are grateful to Aris Filos-Ratsikas and Alexander Golovnev for helpful discussions and to the anonymous referees for their useful suggestions and comments.

\bibliographystyle{abbrv}
\bibliography{fairIScycle}

\end{document}